\newtheorem{claim}{Claim}
\newcommand{\decprob}[3]{%
	\begin{center}%
		\begin{minipage}{0.9\linewidth}%
			\textsc{#1}\\
			\textbf{Input:} #2\\
			\textbf{Question:} #3
		\end{minipage}%
	\end{center}%
}
\newcommand{\hf}[1]{{\color{cyan}  [\text{Haifeng:} #1]}}
\DeclareMathOperator*{\argmin}{arg\,min}
\newtheorem{theorem}{Theorem}
\newtheorem{lemma}[theorem]{Lemma}
\newtheorem{proposition}[theorem]{Proposition}
\def\moverlay{\mathpalette\mov@rlay}
\def\mov@rlay#1#2{\leavevmode\vtop{%
   \baselineskip\z@skip \lineskiplimit-\maxdimen
   \ialign{\hfil$\m@th#1##$\hfil\cr#2\crcr}}}
\newcommand{\charfusion}[3][\mathord]{
    #1{\ifx#1\mathop\vphantom{#2}\fi
        \mathpalette\mov@rlay{#2\cr#3}
      }
    \ifx#1\mathop\expandafter\displaylimits\fi}
\newcommand{\cupdot}{\charfusion[\mathbin]{\cup}{\cdot}}
  \newenvironment{ack}{\section*{Acknowledgements}}{}
\newcommand{\BibTeX}{B\kern-.05em{\sc i\kern-.025em b}\kern-.08em\TeX}
\title{Escape Sensing Games:\\  Detection-vs-Evasion in Security Applications}
\author{
Niclas Boehmer\footnote{Equal Contribution.} \\ Harvard University\\ \small  nboehmer@g.harvard.edu
\and
Minbiao Han$^*$ \\ University of Chicago \\ \small minbiaohan@uchicago.edu
\and
Haifeng Xu \\ University of Chicago \\ \small haifengxu@uchicago.edu
\and
Milind Tambe \\ Harvard University \\ \small  milind\_tambe@harvard.edu
}
\begin{document}
\maketitle

\begin{abstract}
Traditional game-theoretic research for security applications primarily focuses on the allocation of external protection resources to defend targets. 
    This work puts forward the study of a new class of games centered around strategically \emph{arranging  targets} to protect them against a constrained adversary, with motivations from varied domains such as peacekeeping resource transit and cybersecurity.
    Specifically, we introduce Escape Sensing Games (ESGs). 
    In ESGs, a blue player manages the order in which targets pass through a channel, while her opponent tries to capture the targets using a set of sensors that need some time to recharge after each activation. 
    We present a thorough computational study of ESGs.
    Among others, we show that it is NP-hard to compute best responses and equilibria. 
    Nevertheless, we propose a variety of effective (heuristic) algorithms whose quality we demonstrate in extensive computational experiments.
\end{abstract}

\section{Introduction}
The past decade has witnessed an influential line of    research in AI, particularly multiagent systems (MAS),  that  employs computational game theory to tackle critical challenges in security and public safety applications, ranging from protecting national ports \cite{shieh2012protect} to combating  smuggling \cite{bucarey2017building} and illegal poaching \cite{fang2015introduction} to defending our cyber systems \cite{vaneky2012game}. At the core of almost all of these problems is  to optimize the allocation of (often limited) \emph{external forces}  to protect critical targets. 
In this work, we adopt a similar computational game theory approach but address a fundamentally different type of security challenge that looks to improve security via  \emph{optimizing the arrangement of targets}  in the face of adversaries.  
 This research contributes a novel perspective to game-theoretic security strategies, emphasizing target arrangement as a  defense mechanism against adversaries. 

Specifically, we introduce and study 
\emph{Escape Sensing Games} (ESGs). 
In these games, a \emph{blue} player aims to securely navigate a set of  \emph{targets} through a channel, whereas her opponent, the \emph{red} player, controls a set of \emph{sensors} along the channel and tries to sense (and therefore ``steal'') as many  targets as possible.
This model captures strategic interactions   arising in various domains.   One example is the transportation of peacekeeping resources using a convoy of ships or cars over a fixed route with malicious actors (e.g., pirates  or hostile forces) trying to intercept them \cite{pirate-un} (see \Cref{sec:Model} for details).  
In cybersecurity, the blue player could model a network administrator routing sensitive data packets through a network with an attacker trying to intercept them. 
Our model captures strategic interactions in these settings arising when security measures are either unavailable or have already been allocated and the blue player is only left with scheduling the targets to avoid detection by the attacker.

We study the optimal sequential play in  ESGs, where the blue player first commits to an ordering of targets followed by the red player devising an optimal sensing plan. Herein, sensors' capabilities are limited in two ways. First, each sensor is only capable of sensing certain targets, modeling that detection and interception technologies are not uniformly effective across different targets, due to differing characteristics such as size, speed, or defense mechanisms. Second,  sensors need a certain time to recharge after sensing a target, modeling limits inherent in detection and interception systems, where permanent action is not feasible.

There are certain challenges integral to our model that make the computation of equilibria highly non-trivial. First, the action space of both players has an exponential size, rendering standard solution approaches such as support enumeration computationally infeasible. 
Second, also after the strategies of both players have been fixed, the game evolves in a complex, sequential fashion with targets moving one after each other through the channel.  
Connected to this, third, it turns out that the red player's best response problem of coming up with an optimal sensing plan given a target ordering is already NP-hard. 
Consequently, this paper also contributes to the algorithmic research on computationally challenging games, a fairly unexplored topic outside of combinatorial game theory   \cite{DBLP:books/daglib/0005753,DBLP:conf/mfcs/LodingR03,DBLP:conf/mfcs/Demaine01}.

\subsection{Our Contribution}
We contribute a new perspective to the rich literature on computational game theory for security applications through our study of the previously overlooked problem of target arrangement. 
Specifically, we introduce and analyze Escape Sensing Games with a focus on the target-controlling blue player. 
We demonstrate that solving this game is highly complex, as we prove that it is NP-hard for both players to compute their optimal strategies. 
To nevertheless be able to solve ESGs in practice, we devise algorithms for computing the red player's strategy, which turn out to scale well in our experiments. 
Computing the blue player's strategy and thereby the game's Stackelberg equilibrium turns out to be a much more intricate task. 
Our experiments show that our formulation of the problem as a bilevel program is only capable of solving small instances of the game exactly. 
Motivated by this, we present a heuristic that effectively combines simulated annealing with a greedy heuristic and an Integer Linear Program (ILP) for computing the red player's strategy. 
We demonstrate the quality of our heuristic through extensive experiments.

We further this investigation in \Cref{non-coor} by 
studying a different variant where sensors are decentralized hence each sensor acts independently according to a simple greedy strategy. We show that it remains NP-hard for the blue player to compute its optimal strategy. While in this setting blue's problem admits an ILP formulation, we demonstrate in experiments that it can only solve up to medium-sized instances. 
Addressing this, we present heuristics that perform well in our experiments. We also demonstrate that while sensors usually have some gain from coordination, this gain depends decisively on the instance structure and is oftentimes rather small (below~$20\%$). 

Full proofs of all results and descriptions of additional experiments can be found in our full version \cite{us}.

\section{Escape Sensing Games (ESGs): The Model}\label{sec:Model}
\begin{figure}[h]
\vspace*{-0.3cm}
\centering
\includegraphics[width=0.45\textwidth]{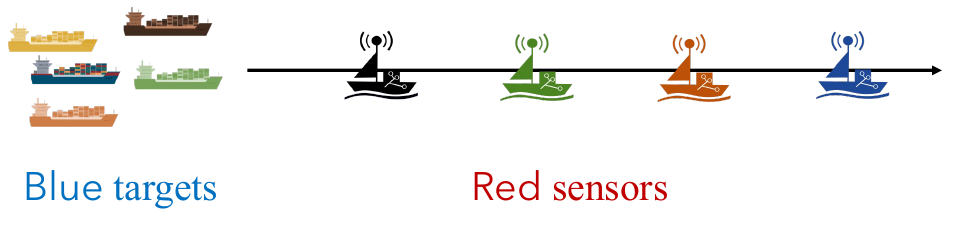}
\vspace{-3mm}
\caption{A visual representation of an Escape Sensing Game.}
\label{fig:model}
\end{figure}

In an \emph{Escape Sensing Game} (ESG)  a blue player (henceforth \texttt{Blue}) tries to route a set of targets through a channel.
They compete against a red player (henceforth \texttt{Red}) who controls a set of sensors and tries to sense (and therefore ``steal'') as many of \texttt{Blue}'s targets as possible.\footnote{Note that the terms   ``sensor'' and ``sensing'' are only part of our terminology and do not limit the applications of our model. For instance, instead of ``sensing'' the targets, \texttt{Red} might also aim to intercept them.}
Formally, an \emph{Escape Sensing Game} is defined by
\begin{enumerate}
    \item a set of targets $T=\{t_1,\dots, t_n\}$, each target $t_i\in T$ equipped with some utility value $v_i\in \mathbb{R}^+$,  
    \item a set of sensors $S=\{s_1,\dots, s_k\}$ and a recharging time $\tau\in \mathbb{N}\cup \{\infty\}$ which is the same for all sensors and, 
    \item a sensing matrix $D\in \{0,1\}^{n\times k}$ where $D_{i,j}=1$ means that sensor $s_j$ is capable of sensing target $t_i$.
\end{enumerate}
We assume that all of these parts are known to both players at any point in time. Note that in this paper, we consider the constant-sum utility structure and leave the general-sum version for future work.
That is, \texttt{Blue} 
seeks to maximize the summed value of \emph{not-sensed} targets, i.e., targets not sensed by any sensor. In contrast, \texttt{Red} seeks to minimize this value, or equivalently, maximize the summed value of targets that are sensed by some sensor.

The strategy of \texttt{Blue} is an ordering of the targets $\sigma: T\to [n]$ that assigns each target $t\in T$ a unique position $\sigma(t)$, i.e., $\sigma$ is a bijection. 
The targets move through the channel according to $\sigma$, i.e., the target on position $1$ moves first, on position $2$ second, and so on. In each \emph{time step} each target moves to the next sensor, leaves the channel (in case it passed all sensors), or enters the channel at the first sensor (in case it is the next target in the ordering $\sigma$).

The strategy of \texttt{Red} is a sensing plan $\psi: S \to 2^{T}$ that maps each sensor to a subset of targets $T'\subseteq T$ sensed by the sensor, where each sensor senses different targets, i.e., $\psi(s)\cap \psi(s')=\emptyset$ for each $s\neq s'\in S$. 
\texttt{Red} cannot play arbitrary sensing plans but only those which are \emph{valid}.
A sensing plan is \emph{valid} (with respect to a senor ordering $\sigma$) 
if 
\begin{enumerate*}[label=(\roman*)]
    \item a sensor only senses targets it has the capabilities to sense, i.e., for each $s_j\in S$ and $t_i\in \psi(s_j)$ we have $D_{i,j}=1$, and
    \item a sensor pauses for at least $\tau$ time steps after sensing a target, i.e., for each $s\in S$ and $t,t'\in \psi(s)$ we have $|\sigma(t)-\sigma(t')|>\tau$.
\end{enumerate*}
Given a sensing plan, we can immediately calculate the value of not-sensed targets as $v(\psi):=\sum_{t_i\in T\setminus \cup_{s\in S} \psi(s)} v_i$,  which quantifies \texttt{Blue}'s~utility.

\paragraph{Objectives and equilibrium} Due to the motivating applications of our interest,  this work adopts \texttt{Blue}'s perspective and analyzes sequential play in this game by assuming \texttt{Blue} moves first.\footnote{Note that this is already reflected in our game definition, since the validity of \texttt{Red}'s sensing plan depends on the strategy of \texttt{Blue}.  Thus, \texttt{Red} cannot move before or simultaneous to \texttt{Blue}.}
Our analysis consists of two parts. 
First, we will analyze the best response problem for \texttt{Red} called \textsc{Best Red Response}: Given an ordering $\sigma$ of the targets, output the sensing plan $\psi$ that is valid with respect to $\sigma$ and minimizes $v(\psi)$ among such plans.
Second, to compute the optimal strategy of \texttt{Blue}, we analyze the game's Stackelberg equilibrium\footnote{We assume that ties in the strategies are broken according to some predefined lexicographic ordering of the strategies.}, which can be written as the following bilevel optimization problem: $\max_{\sigma} \min_{\substack{\psi: \text{$\psi$ is valid wrt. $\sigma$}}} v(\psi).$
We term the corresponding computational problem \textsc{Blue Leader Stackelberg Equilibrium}.

\paragraph{A motivating application} One major motivation of our work is the secure transit of peacekeeping resources in the presence of adversarial actors such as pirates, which has critical importance due to past incidents, e.g., to the United Nations \cite{pirate-un}. Citing the UN's peacekeeping mission manual \cite{UNmarine}, ``\emph{protecting shipping in transit ensures the safety and security of vessels as they pass through waters threatened by piracy on the high seas...}'' In these applications, UN plays \texttt{Blue}'s role whereas pirates correspond to \texttt{Red}, who can observe the ordering of targets and then act second.  The UN commands a fleet of ships (i.e., targets in our model) that often carry resources of different importance and that can be arranged strategically. Protecting shipping is overall a complex, multi-facet, task and our model captures one of the phases after potential (often scarce) security measures have already been allocated to the ships and the pirates look to identify targets to attack. According to Winn and Govern  \cite{winn2008maritime},  pirates often use a set of boats (i.e., sensors in our model) to probe different passing targets, usually by following them to observe their speed, crew amount, firearm, etc. to judge based on this whether they are capable of capturing the ship. Such probing takes time, which is modeled by the recharging time $\tau$.  

\paragraph{Sensor and target types}
We develop some customized algorithms for instances with only a few different target or sensor models:
We say that two sensors $s_i,s_j\in S$ are of the same type if they are capable of sensing the same targets, i.e., the $i$-th and $j$-th column of the sensing matrix are identical. 
We say that two targets $t_i,t_j\in T$ are of the same type if they have the same utility value and can be sensed by the same sensors, i.e., $v_i=v_j$ and the $i$-th and $j$-th row of the sensing matrix are identical. 
We denote as $\Gamma=\{\gamma_1,\dots,\gamma_{n_{\chi}}\}$ and $\Theta=\{\theta_1,\dots,\theta_{k_{\chi}}\}$ the set of target and sensor types, respectively. 
It is easy to see that $k_\chi\leq 2^{n_\chi}$, as a sensor's sensing capabilities are defined by the set of target types it can sense. Similarly, assuming that all targets have the same value, it holds that $n_\chi \leq 2^{k_\chi}$.
\section{Related Work}
While the escape sensing game model is new, it is closely related to a few lines of AI research, as detailed below. 

\paragraph{Computational game theory for security} Conceptually, our work subscribes to the extensive MAS literature on computational game theory for tackling security challenges. The Stackelberg security game \cite{tambe2011security} is one widely studied example. Other game-theoretic models  include the hide-and-seek game \cite{chapman2014playing}, blotto games \cite{behnezhad2018battlefields}, auditting games \cite{blocki2015audit} and catcher-evader games \cite{li2016catcher}. Most of these games study the optimal usage of security forces under different game structures. In contrast, our ESG model is motivated by detection-vs-evasion situations in which security forces have already been allocated.

\paragraph{Scheduling} On a formal level,  our problem is to schedule/order targets in an adversarial environment, which shares similarities with the classic problem of \emph{scheduling} that looks to assign tasks to different machines to optimize certain criteria  \cite{lenstra1977complexity}. There is a rich body of AI research on scheduling, ranging from solving varied problems using AI techniques such as satisfiability \cite{crawford1994experimental} and distributed constraint optimization \cite{sultanik2007modeling}, to developing new models of scheduling problems under uncertainty \cite{bampis2022scheduling} or in multi-agent setups~\cite{zhang2016co}.

In fact, the \textsc{Best Red Response} problem can be formulated as the following slightly non-standard scheduling problem: There are $k$ machines (modeling sensors). In each step, a job (modeling a target) arrives. The job can be processed (modeling sensing) by a given subset of machines and if executed successfully generates a given reward value. The job has a processing time of $\tau$ and needs to be processed within the next $\tau$ steps. This implies that the job needs to be processed (i.e., sensed) either now or its reward is lost.

\section{The Algorithmics  of Escape Sensing Games}  \label{sec:ESG}
We analyze the computational complexity of ESGs starting with \texttt{Red}'s best response problem, followed by computing equilibria.

\subsection{Computing \texttt{Red}'s Best Response Strategy} \label{sec:compRed}
We analyze \texttt{Red}'s best response problem that \texttt{Red} needs to solve in each game after \texttt{Blue} has committed to a target ordering. 
This problem turns out to be NP-hard, even if \texttt{Red} is only interested in determining whether it can sense all targets. This intractability result is the first strong indicator of the intricate game dynamics in ESGs. 
\begin{restatable}{theorem}{redresponse}\label{redresponse-hardness}
\textsc{Best Red Response} is NP-complete, even when asked to decide whether \texttt{Red} can sense all targets or not.
\end{restatable}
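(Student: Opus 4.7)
The proof plan is in two parts. Membership in NP is immediate: a candidate sensing plan $\psi:S\to 2^T$ has polynomial description size, and its validity (capability constraints from $D$ and spacing $|\sigma(t)-\sigma(t')|>\tau$) together with its completeness (every target sensed) can be checked in polynomial time.

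For NP-hardness, I would reduce from 3-SAT. For each variable $x_i$ I would introduce two dedicated sensors $s_i^{T}$ and $s_i^{F}$, together with a \emph{variable gadget} whose targets are eligible only for these two sensors and which is designed so that any valid sensing plan must essentially commit to ``using up'' one of $\{s_i^{T},s_i^{F}\}$, with this commitment encoding the truth value of $x_i$. For each clause $C_j$ I would introduce a \emph{clause target} whose eligible set is $\{s_k^{T}:\,x_k\in C_j\}\cup\{s_k^{F}:\,\neg x_k\in C_j\}$, so that the clause target can be sensed iff a sensor corresponding to a satisfied literal is still free at its position, i.e., iff the committed assignment satisfies $C_j$.

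The ordering $\sigma$ places the variable-gadget targets before the clause targets, and the crucial requirement is that at every clause position the busy/free status of each $s_i^{\bullet}$ faithfully reflects the committed assignment of $x_i$. The main obstacle is that $\tau$ is a single global parameter: a naive one-target variable gadget would let the chosen sensor reset long before any clause target arrives, severing the intended correspondence. I would overcome this by chaining a sequence of \emph{maintenance} targets per variable, spaced at distance $\tau+1$ and restricted to the list $\{s_i^{T},s_i^{F}\}$ (supplemented, if needed, by auxiliary single-sensor targets to break the symmetry between $s_i^{T}$ and $s_i^{F}$); the spacing plus list restrictions force whichever sensor is used for the first chain target to continue being used for all subsequent ones, keeping exactly one of $s_i^{T},s_i^{F}$ busy across the entire clause window. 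Under this construction, all targets can be jointly sensed iff the 3-SAT formula is satisfiable.

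An alternative route, suggested by the scheduling reformulation mentioned in the related-work section, is to reduce from a known NP-hard scheduling problem with machine-eligibility constraints (e.g., scheduling jobs with deadlines on parallel machines under restricted eligibility); this could potentially obviate much of the ad-hoc gadget engineering above, at the cost of importing a less self-contained hardness result.
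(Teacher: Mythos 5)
Your NP-membership argument is fine and matches the paper's (implicit) one. The hardness reduction, however, has a gap at its crux: the claim that ``the spacing plus list restrictions force whichever sensor is used for the first chain target to continue being used for all subsequent ones'' does not hold. If consecutive maintenance targets for variable $x_i$ are placed at distance exactly $\tau+1$, then \emph{both} $s_i^{T}$ and $s_i^{F}$ have fully recharged by the time the next chain target arrives, so the two sensors may freely swap roles at every link of the chain; if instead you place them at distance at most $\tau$, the recharging constraint \emph{forces} them to alternate. In neither regime is a single sensor pinned down for the whole clause window. This breaks the backward direction of your reduction: a valid sensing plan covering all targets could have $s_i^{T}$ free in one clause window and $s_i^{F}$ free in another, so a clause containing $x_i$ and a clause containing $\neg x_i$ could both be ``covered'' without any consistent truth assignment existing. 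The auxiliary single-sensor targets you mention do not obviously repair this, since any target eligible only for $s_i^{T}$ simply forces $s_i^{T}$ to be busy there, which fixes rather than merely records the commitment.

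For contrast, the paper reduces from \textsc{Hitting Set} and achieves exactly the cross-round consistency you need, but via a counting argument rather than spacing alone: in each ``round'' the $|U|$ element sensors must collectively sense $|U|$ targets packed into a window shorter than $\tau$, so each senses exactly one; and the gap between a non-selection target in round $j$ and the selection targets of round $j+1$ is also shorter than $\tau$, so a sensor that leaves the ``selection-sensing'' group can never rejoin it. Since that group must always have size exactly $|U|-t$, it is forced to be the \emph{same} set of sensors in every round, and its complement must hit every set. If you want to keep your 3-SAT route, you would need an analogous mechanism --- e.g., enough forced per-round work that each of $s_i^{T},s_i^{F}$ must sense exactly one target per round and a timing obstruction that prevents them from exchanging roles between rounds --- which is essentially the engineering the paper's filling and dummy targets perform.
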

\begin{proof}
We reduce from \textsc{Hitting Set} where we are given a universe $U$, a collection of sets $\mathcal{Z}=\{Z_1,\dots, Z_m\}$ and an integer $t$, and the question is whether there a size-$t$ subset $U'\subseteq U$ containing at least one element from each set in $ \mathcal{Z}$ (we assume that $t\geq 2$ and $|U|>t$). 
 
In the construction, all targets have a value of $1$ and the question is whether \texttt{Red} can sense all targets.
As the core of the construction we add \emph{element sensors} $\{a_u\mid u\in U\}$, \emph{set targets} $\{\alpha_{Z} \mid Z\in \mathcal{Z}\}$, and \emph{selection targets} $\{\beta_{i,j}\mid i\in [|U|-t], j\in [m]\}$. 
Each element sensor can sense all selection targets and all set targets corresponding to sets in which the element appears. 
Regarding the ordering of targets, it is easiest to think of the targets as being arranged in ``rounds''. In each round $j\in [m]$, first the selection targets $\{\beta_{i,j}\mid i\in [|U|-t]\}$ move through the channel followed by the the set target $\alpha_{Z_j}$. 
The idea is that the same $|U|-t$ element sensors sense the selection targets in every round, which correspond to the elements that are not part of the hitting set (we extend the construction in the following paragraph to ensure that this holds). 
Then, the remaining $t$ element sensors need to form a hitting set to be able to sense the set target in each round.

We extend the construction as follows. 
We add \emph{filling targets} $\gamma_{i,j}$ for all $i\in [t-1]$ and $j\in [m]$, which all element sensors can sense. 
Moreover, we add \emph{dummy sensors} $d_{i,j}$ for each $i\in [2|U|]$ and $j\in [m]$ and \emph{dummy targets} $\delta_{i,j}$ for each $i\in [2|U|]$ and $j\in [m]$. 
For each $i\in [2|U|]$ and $j\in [m]$, dummy sensor $d_{i,j}$ can sense dummy target $\delta_{i,j}$. 
We set $\tau:=2|U|+1$. 
Formally, the target ordering $\sigma$ is constructed---in multiple ``rounds''---as follows. In each round $j\in [m]$, we first move the selection targets $\{\beta_{i,j}\mid i\in [|U|-t]\}$ through the channel, then the dummy targets $\{\delta_{i,j}\mid i\in [|U|]\}$, then the set target $\alpha_{Z_j}$, then the filling targets $\{\gamma_{i,j}\mid i\in [t-1]\}$ and then the dummy targets $\{\delta_{i,j}\mid i\in [|U|+1,2|U|]\}$ (the ordering of targets in each of the groups is arbitrary). 

\paragraph{Proof of correctness: forward direction}
Assume that $U'\subseteq U$ is a size-$t$ hitting set of $\mathcal{Z}$. 
For each $i\in [2|U|]$ and $j\in [m]$, we let $d_{i,j}$ sense $\delta_{i,j}$. 
We construct the sensing plan for the element sensors iteratively as follows. 
In each round $j\in [m]$, we let each of the $|U|-t$ element sensors $\{a_u\mid u\in U\setminus U'\}$ sense exactly one of the selection targets $\{\beta_{i,j}\mid i\in [|U|-t]\}$. 
Now, let $u^*$ be an element from $U'$ that is contained in $Z_j$ (such an element needs to exist because $U'$ is a hitting set). 
We let $a_{u^*}$ sense $\alpha_{Z_j}$ and we let each of the $t-1$ element sensors  $\{a_u\mid u\in U' \setminus \{u^*\}\}$ sense exactly one of the filling targets $\{\gamma_{i,j}\mid i\in [t-1]\}$. 

The constructed sensing plan senses all targets and clearly respects the sensing matrix. 
It remains to be argued that the recharging times of all element sensors are respected (dummy sensors only sense one target).
For each $u\in U \setminus U'$, we have that $a_u$ senses one selection target in each round. Between two selection targets in two different rounds there are at least $2|U|$ dummy targets and one set target, so recharging times are respected. 
For each $u\in U'$, the sensor $a_u$ senses either a set or filling target in each round. There are $2|U|$ dummy targets and $|U|-t\geq 1$ selection targets between each two sets and filling targets from different rounds, so recharging times are respected. 

\paragraph{Proof of correctness: backward direction}
Assume that $\psi$ is a valid sensing plan that senses all targets. 
Consequently, in each round, the $|U|$ element sensors need to sense $|U|-t$ selection, $t-1$ filling, and one set target. 
As $\psi$ is valid and there are only $|U|-t-1+|U|+1+t-2=2|U|-2$ targets between the first selection and last filling target in each round, this means that each element sensor needs to sense exactly one of these targets in each round.
Note that an element sensor that senses a non-selection (i.e., either a set or filling) target in round $j\in [m]$ cannot sense a selection target in round $j+1$, as there are only $t-1+|U|+|U|-t-1=2|U|-2$ targets between the first non-selection target in round $j$ and the last selection target in round $j+1$. 
Consequently, as each element sensor needs to sense one target in each round, it follows   
that there is a set $U''\subseteq U$ of $|U|-t$ elements so that the corresponding element sensors sense a selection target in every round. 
Consequently, the remaining $t$ element sensors need to sense all set targets. 
As an element sensor is only capable of sensing a set target if the element appears in the set, it follows that $U\setminus U''$ is a size-$t$ hitting set of $\mathcal{Z}$.
\end{proof}

Despite this intractability result, it is still possible to construct exact combinatorial algorithms for  \textsc{Best Red Response}. 
In particular, we present a dynamic programming-based algorithm empowered by some structural observations on ESGs that runs in 
$\mathcal{O}( n\cdot (k_\chi +1)^{\tau+2})$ (recall that $k_\chi\leq k$). 
This algorithm in particular implies that the problem becomes polynomial-time solvable if the recharging time, which we expect to be rather small in comparison to the number of targets, is a constant. 
\begin{restatable}{proposition}{redresponseDP}\label{pr:redresponseDP}
There is a   $\mathcal{O}( n\cdot (k_\chi +1)^{\tau+2})$-time algorithm for \textsc{Best Red Response}.  
\end{restatable}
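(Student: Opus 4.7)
The plan is to design a dynamic program that exploits the fact that sensors of the same type are completely interchangeable.

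\textbf{Step 1: Reduction to type-level schedules.} I would first establish a structural lemma saying that valid sensing plans are, up to objective value, in bijection with \emph{type schedules}, i.e., functions $f:[n]\to \Theta\cup\{\bot\}$ satisfying (a) whenever $f(i)=\theta$, type $\theta$ is capable of sensing the target at position $i$ (as dictated by $D$), and (b) in every window of $\tau+1$ consecutive positions, the number of positions mapped to any type $\theta$ is at most $|\theta|$ (the total number of sensors of type $\theta$). The forward direction is immediate: let $f(i)$ be the type of whichever sensor senses the target at position $i$ (or $\bot$ if none). For the reverse direction, for each type $\theta$ separately I would assign the positions in $f^{-1}(\theta)$ to specific sensors greedily: whenever $f(i)=\theta$, pick any sensor of type $\theta$ that is not still recharging. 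The window condition guarantees that at most $|\theta|-1$ sensors of type $\theta$ have sensed within positions $[i-\tau,i-1]$, so at least one such sensor is free, making the greedy assignment well defined and the resulting plan valid.

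\textbf{Step 2: The dynamic program.} I would then sweep through target positions $i=1,\dots,n$ in the order given by $\sigma$, maintaining the state $s_i=(f(i-\tau),\dots,f(i-1))\in(\Theta\cup\{\bot\})^{\tau}$ encoding the last $\tau$ decisions (padded with $\bot$ before position~$1$). This state contains exactly the information needed to test constraint~(b) for the window ending at position $i$. From a given state the DP branches over the $k_\chi+1$ candidate values of $f(i)$, keeping only branches that satisfy (a) and (b); feasibility is checked by scanning the length-$\tau$ tuple and counting occurrences of the candidate type. The DP value stored at each state is the minimum accumulated loss $\sum_{j\le i:\,f(j)=\bot} v_j$, and the recurrence is the obvious one (add $v_i$ on a $\bot$-choice, $0$ on a type-choice). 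Standard backtracking then recovers an optimal $f$ and, via Step~1, a best response for \texttt{Red}.

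\textbf{Step 3: Runtime.} There are at most $n\cdot(k_\chi+1)^{\tau}$ states, each processed by considering $k_\chi+1$ branches whose feasibility checks cost $O(k_\chi+1)$ after a single $O(\tau)$ precount, yielding the claimed $O(n\cdot(k_\chi+1)^{\tau+2})$ bound (with any leftover $\tau$ factor absorbed by the exponential term).

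The main obstacle is Step~1: a naive DP that tracks individual sensors would blow up by a factor of $2^{k\tau}$, and the savings hinge on showing that the succinct window condition is not merely necessary but also \emph{sufficient} to lift a type schedule back to a genuine sensing plan. The greedy round-robin argument above is what closes that gap; the subsequent DP and its analysis are routine.
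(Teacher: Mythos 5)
Your proposal is correct and takes essentially the same route as the paper's proof: a dynamic program that sweeps through the targets in the order $\sigma$, whose state records the (type-level) sensing decisions for the last $\tau{+}1$ positions, justified by a collapsing lemma showing that the per-sensor recharging constraints are equivalent to the window condition that at most $\ell_\theta$ of any $\tau+1$ consecutive targets are assigned to type $\theta$. The only cosmetic differences are that the paper first presents the DP over concrete sensors and then collapses to types (proving the lift-back via a ``longest-idle sensor'' greedy and pigeonhole, where you use ``any free sensor'' with a direct count), and your runtime accounting is no less careful than the paper's own.
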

\begin{proof}[Proof Sketch]
Our idea is to construct a valid sensing plan iteratively by going through the arriving targets one by one (we assume that the ordering of targets is $t_1,\dots, t_n$). For each target, we either decide that it will not be sensed or assign it to one of the sensors so that the resulting plan is still valid.  
Our key observation to bring down the time and space complexity of the dynamic program is that we do not need to store the full sensing plan to ensure the validity of the plan after updating it. Instead, it is sufficient to know for each sensor whether it has sensed a target in the last $\tau$ steps. 
More formally, given a valid sensing plan $\psi$ that has been constructed by iterating over the first $i\in [n]$ targets, we only store the following information: 
\begin{enumerate*}[label=(\roman*)]
\item the value of all targets $\{t_1,\dots, t_i\}$ that have not been assigned to a sensor in $\psi$, i.e., $\sum_{t_j\in \{t_1,\dots, t_i\}\setminus \cup_{s\in S} \psi(s)} v_j$,
\item the sensors the last $\tau+1$ targets have been assigned to.
\end{enumerate*}
It is possible to store this information in a table of size $\mathcal{O}(n\cdot (k +1)^{\tau+1})$ where each cell can be computed in $\mathcal{O}(k)$-time.  To extend the algorithm to sensor types, we prove that we can collapse sensors of one type into a ``meta'' sensor, making it sufficient to bookmark the types of sensors that have sensed the last $\tau+1$ targets.
\end{proof}
We conclude by giving a clean ILP formulation of \textsc{Best Red Response}, which turns out to scale very favorably in our experiments allowing us to solve instances with up to $10000$ targets within one minute.
\begin{proposition}\label{pr:ILP-best}
    \textsc{Best Red Response}  admits  an ILP formulation with  $\mathcal{O}(n\cdot k)$ binary variables and $\mathcal{O}(n\cdot k)$ constraints. 
\end{proposition}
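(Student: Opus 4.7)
The plan is to set up a direct binary formulation where for each pair $(t_i, s_j) \in T \times S$ we introduce a binary variable $x_{i,j} \in \{0,1\}$ encoding whether $t_i \in \psi(s_j)$. This immediately yields the desired $\mathcal{O}(n \cdot k)$ variable count. The sensing plan $\psi$ is then recovered by setting $\psi(s_j) := \{t_i : x_{i,j} = 1\}$, and the objective $\max_x \sum_{i,j} v_i \cdot x_{i,j}$ captures the minimization of $v(\psi) = \sum_i v_i - \sum_{i,j} v_i \cdot x_{i,j}$, exactly encoding \textsc{Best Red Response}.

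Three families of linear constraints are needed to enforce validity. First, the disjointness condition $\psi(s) \cap \psi(s') = \emptyset$ is captured by $\sum_{j \in [k]} x_{i,j} \leq 1$ for every $i \in [n]$, contributing $n$ constraints. Second, the sensing-capability requirement $D_{i,j} = 1$ is handled either by adding the bounds $x_{i,j} \leq D_{i,j}$ for every $(i,j)$ or, more economically, by simply omitting variables for which $D_{i,j} = 0$. Third, and most delicately, the recharging condition $|\sigma(t) - \sigma(t')| > \tau$ for any two targets sensed by the same sensor must be translated linearly. The key observation is that this pairwise condition is equivalent to requiring that no window of $\tau + 1$ consecutive positions contains two targets assigned to the same sensor. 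Hence, for each $j \in [k]$ and each $\ell \in [1, n-\tau]$, I add the sliding-window constraint
\begin{equation*}
\sum_{i \, : \, \sigma(t_i) \in \{\ell, \ell+1, \dots, \ell+\tau\}} x_{i,j} \;\leq\; 1,
\end{equation*}
yielding $k \cdot (n - \tau) = \mathcal{O}(n \cdot k)$ constraints.

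The counts check out: $n \cdot k$ binary variables and $n + k(n - \tau) = \mathcal{O}(n \cdot k)$ constraints (plus possibly $n \cdot k$ capability bounds, which are still $\mathcal{O}(n \cdot k)$). The only step requiring any verification is the equivalence between the pairwise condition on $\psi$ and the sliding-window inequalities, which I expect to be the one possible sticking point in the write-up; this is handled by the elementary fact that two distinct positions $p, p'$ satisfy $|p - p'| \leq \tau$ if and only if $\{p, p'\} \subseteq \{\ell, \ldots, \ell + \tau\}$ for some $\ell$. With this observation in place, the correspondence between feasible integer solutions of the ILP and valid sensing plans is a direct unpacking of definitions, and the objective match is immediate.
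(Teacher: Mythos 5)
Your proposal is correct and follows essentially the same construction as the paper: binary assignment variables $x_{i,j}$, capability bounds $x_{i,j}\leq D_{i,j}$, and sliding-window constraints $\sum_{\ell=i}^{i+\tau} x_{\ell,j}\leq 1$ to enforce recharging. The only (immaterial) difference is that the paper adds a dummy $(k+1)$-st ``unsensed'' column with an equality constraint $\sum_{j\in[k+1]}x_{i,j}=1$ and minimizes $\sum_i v_i\cdot x_{i,k+1}$, whereas you use $\sum_{j\in[k]}x_{i,j}\leq 1$ and maximize the complementary sensed value.
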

\begin{proof} 
    We model an instance $\mathcal{I}$ of \textsc{Best Red Response} as an ILP as follows. 
    We assume that the targets are ordered as $t_1,\dots, t_n$. 
    We create a binary variable $x_{i,j}$ for each $i\in [n]$ and $j\in [k+1]$. 
    Setting $x_{i,j}$ to one corresponds to letting sensor $s_j$ sense target $t_i$ if $j\in [k]$, and letting  $t_i$ not be sensed by any sensor if $j=k+1$. 

    To ensure that \texttt{Red} minimizes the value of not-sensed targets, the optimization criterion becomes: 
    $\min \sum_{i\in [n]} v_i\cdot x_{i,k+1}.$
    To ensure the validity of the sensing plan $\psi$, for each $i\in [n]$, we enforce that: 
    $\sum_{j\in [k+1]} x_{i,j}=1.$
    Moreover, to ensure that sensor capabilities are respected, we impose for each $i\in [n]$ and $j\in [k]$ that: 
    $x_{i,j}\leq D_{i,j}.$
    Lastly, to enforce that recharging times are respected, for each $j\in [k]$ and $i\in [n-\tau]$ we add the constraint:
    $\sum_{\ell=i}^{i+\tau} x_{\ell,j}\leq 1.$
\end{proof}

\subsection{Solving for the Stackelberg Equilibrium}\label{sec:Stackle}
We now study the problem of computing \texttt{Blue}'s optimal strategy, i.e., to solve \textsc{Blue Leader Stackelberg Equilibrium}. \Cref{redresponse-hardness} already shows the NP-hardness of \textsc{Best Red Response}. While this does not imply the hardness of computing Stackelberg equilibria\footnote{Note that the fact that it is NP-hard for \texttt{Red} to best respond to certain \texttt{Blue} strategies (as constructed in the reduction of \Cref{redresponse-hardness}) does not imply that is also hard for \texttt{Red} to best respond to the particular Stackelberg equilibrium strategy of \texttt{Blue} (as these strategies might admit some structure that makes it easier to best respond).}, a convincing intractability result for \texttt{Blue}'s optimal strategy shall ideally ``disentangle'' its complexity from \texttt{Red}'s best response problem. With this in mind, we prove the NP-hardness of \textsc{Blue Leader Stackelberg Equilibrium}  even in situations where  \texttt{Red}'s best response problem is linear-time solvable. 
This demonstrates that the complexity in our reduction does not come from finding \texttt{Red}'s strategy but from the problem of whether \texttt{Blue} can arrange the targets in an optimal way.  

\begin{restatable}{theorem}{stackcomp}
\textsc{Blue Leader Stackelberg Equilibrium} is NP-hard, even on instances where  \textsc{Best Red Response} is linear-time computable and the recharging time is $3$. 
\end{restatable}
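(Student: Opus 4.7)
The plan is to reduce from a suitable NP-hard combinatorial problem, most naturally a permutation-style problem such as \textsc{3-Partition}, a variant of \textsc{Sequencing}, or \textsc{Numerical Matching with Target Sums}. The reduction must produce ESG instances satisfying three structural requirements simultaneously: the recharging time is $\tau = 3$; only a constant number of sensor types $k_\chi$ appears, so that by \Cref{pr:redresponseDP} the \textsc{Best Red Response} subproblem is linear-time computable; and Blue's Stackelberg value encodes the decision version of the source problem. The emphasis is therefore on Blue's arrangement gadgets, not on making Red's subproblem intricate.

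Concretely, I would fix a small constant $k_\chi$ (ideally $2$ or $3$) and construct target types whose sensing patterns encode the combinatorial structure of the source instance. Numerical data from the source (e.g., item sizes in \textsc{3-Partition}) would be represented by the multiplicities of certain target types. For each item or constraint of the source, I would build a gadget that forces, in any Blue-optimal ordering, a very specific local pattern of targets within each window of $\tau + 1 = 4$ consecutive positions. In addition, I would introduce buffer or filler targets so that Red's optimal response becomes essentially canonical --- intuitively, firing each sensor at the earliest feasible moment --- so that $v(\psi)$ becomes a simple function of the ordering. Values would be calibrated so that Blue attains the maximum possible utility if and only if the original instance is a YES instance.

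The forward direction would consist of exhibiting an ordering that encodes a valid source solution, then verifying that every window of four positions can be served by the available sensor types so that Red is forced to sense only fillers. The backward direction is more delicate: I must argue that without a source solution, every ordering crams "too many" targets of some critical type into some length-$4$ window, and a pigeonhole-style counting argument then forces Red to intercept at least one high-value target, strictly lowering Blue's utility. The main obstacle is the gadget design: with $k_\chi$ constrained to a constant and the window length only $4$, the ``alphabet'' available to encode combinatorial choices is very small, so the gadgets must be constructed carefully to rule out cheating orderings in which Blue saves the maximum value without genuinely encoding a source solution. Getting this rigidity while preserving linear-time \textsc{Best Red Response} is where the proof will need the most care.
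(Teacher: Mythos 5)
There is a genuine gap: what you have written is a reduction \emph{strategy}, not a reduction. You name several candidate source problems, list the properties the gadgets would need, and then explicitly defer the one step that carries all the difficulty --- designing gadgets rigid enough to rule out cheating orderings while keeping the instance structured enough for a fast best response. Neither direction of correctness is argued for any concrete construction, so there is nothing yet that could be checked. Moreover, the specific mechanism you propose for guaranteeing linear-time \textsc{Best Red Response} --- forcing a constant number of sensor types $k_\chi$ and invoking \Cref{pr:redresponseDP} --- is both unnecessary and risky. It is unnecessary because one can instead make \texttt{Red}'s optimal plan \emph{canonical by value dominance}: give each auxiliary sensor a private target worth more than everything else that sensor could catch, so the optimal plan is forced greedily and is computable in linear time even though the number of distinct sensor types grows with the instance. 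It is risky because with constantly many sensor types the sensing matrix has only $2^{k_\chi}=O(1)$ distinct rows; combined with constant $\tau$, the only remaining source of hardness is the multiset of target values, and it is far from clear (and you give no argument) that \textsc{Blue Leader Stackelberg Equilibrium} remains NP-hard under that restriction --- the type-collapsing dynamic programs in this paper are exactly the kind of tool that tends to make such highly typed instances tractable. You would need to prove hardness survives your own restriction before the rest of the plan can even start.

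For comparison, the paper's proof reduces from \textsc{Independent Set} on $3$-regular triangle-free graphs with $\tau=3$. Each vertex $v$ gets a value-$1$ vertex target, three value-$3$ blocker targets, and three private vertex sensors (each sensing $\alpha_v$ and one blocker); each edge $e=\{u,v\}$ gets a value-$3$ edge target sensed by an edge sensor that also sees $\alpha_u,\alpha_v$, plus a value-$4$ constraining target with its own constraining sensor. Value dominance forces every auxiliary sensor's choice, so \texttt{Red}'s best response is linear-time computable; a vertex target survives iff all six sensors that could catch it are occupied inside its $3$-surrounding, which \texttt{Blue} can arrange exactly for the vertices of an independent set (the backward direction uses triangle-freeness and a counting argument to show the $3$-surroundings of surviving vertex targets are disjoint, hence no two survivors share an edge). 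If you want to salvage your plan, the most important repairs are: drop the constant-$k_\chi$ requirement and obtain easy best responses through value dominance instead, and then actually exhibit the window gadgets and prove the backward (``no cheating'') direction.
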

Note that the NP-hardness upholds even if sensors' recharging time is constant, a case in which \texttt{Red}'s best response problem is polynomial-time solvable (see \Cref{pr:redresponseDP}).
Our hardness result indicates that computing \texttt{Blue}'s optimal strategy is a generally much harder problem than computing  \texttt{Red}'s optimal strategy.
In fact, it remains open whether \textsc{Blue Leader Stackelberg Equilibrium} is contained in NP or whether it is complete for complexity classes beyond NP. We suspect the latter to hold.

\subsubsection{Bilevel Optimization}
In light of this, it is unclear (and from our perspective rather unlikely) that \textsc{Blue Leader Stackelberg Equilibrium} admits an ILP formulation. 
Naive brute-force approaches are also computationally infeasible, as we would need to enumerate all $n!$ possible target orderings and solve the NP-hard \textsc{Best Red Response} problem as a subroutine for each of them. 

Thus, we turn to a formulation as a bilevel optimization problem \cite{colson2007overview} as one way to solve the problem exactly. 
In such formulations, constraints are still linear,
but there exist two connected levels of the problem, i.e., an outer and an inner level. The inner level controls certain variables that it sets to minimize an objective subject to linear constraints that also involve variables controlled by the outer level, while the outer level sets these variables to maximize the objective. 
In our problem, we can model \texttt{Red}'s best response problem as the inner level loosely following the ILP from \Cref{pr:ILP-best}. 
The outer-level models \texttt{Blue}'s problem. 
The key parts of the outer level are variables for each target that encode the position in which the target appears in the final ordering and that are used in the inner level to ensure the validity of the sensing plan. 
\begin{restatable}{proposition}{bilevel}\label{pr:bilevel}
\textsc{Blue Leader Stackelberg Equilibrium} admits a bilevel optimization formulation with $\mathcal{O}(n^2+n\cdot k)$ binary variables, $\mathcal{O}(n)$ integer variables, and $\mathcal{O}(n^2\cdot k)$ constraints.  
\end{restatable}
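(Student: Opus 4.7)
The plan is to express \textsc{Blue Leader Stackelberg Equilibrium} directly in bilevel form $\max_\sigma\min_\psi v(\psi)$, with the outer level encoding \texttt{Blue}'s ordering and the inner level being a position-indexed variant of the ILP of \Cref{pr:ILP-best} parameterized by that ordering. The key design choice is to use position-indexed (rather than target-indexed) sensing variables, which keeps the recharging constraint a simple window sum while pushing all ordering-dependence into a single clean capability constraint.

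At the outer level, I would introduce binary permutation variables $y_{i,p}$ for $i,p\in[n]$ with $y_{i,p}=1$ iff target $t_i$ sits at position $p$, together with integer variables $\sigma_i\in[n]$ linked via $\sigma_i=\sum_{p\in[n]} p\cdot y_{i,p}$. The standard assignment constraints $\sum_p y_{i,p}=1$ for each $i$ and $\sum_i y_{i,p}=1$ for each $p$ ensure that $\sigma$ is a bijection. This contributes $n^2$ binary variables, $n$ integer variables, and $O(n)$ outer constraints.

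At the inner level, I would use binary variables $x_{p,j}$ for $p\in[n]$ and $j\in[k+1]$, where $x_{p,j}=1$ means that the target currently at position $p$ is sensed by sensor $s_j$ (with $j=k+1$ representing ``not sensed''), giving $n(k+1)=O(nk)$ further binary variables. The inner objective is $\min\sum_{p\in[n]} x_{p,k+1}\cdot\sum_{i\in[n]} v_i y_{i,p}$, which is linear in $x$ once the outer $y$ is fixed and equals the total value of not-sensed targets. Three families of inner constraints suffice: (i) $\sum_{j\in[k+1]} x_{p,j}=1$ for each $p$, enforcing one label per position; (ii) $x_{p,j}+y_{i,p}\le 1+D_{i,j}$ for each $p\in[n]$, $i\in[n]$, $j\in[k]$, which forces $x_{p,j}=0$ whenever the target at position $p$ cannot be sensed by $s_j$; and (iii) $\sum_{\ell=p}^{p+\tau} x_{\ell,j}\le 1$ for each $p\in[n-\tau]$, $j\in[k]$, enforcing the recharging time. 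The outer problem then maximizes the resulting value. Summing up yields $O(n^2+nk)$ binary variables, $O(n)$ integer variables, and $O(n^2k)$ constraints, the dominant term coming from family (ii).

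The main subtlety is expressing the coupling between the outer ordering and the inner sensor-capability requirement linearly without incurring extra binary or integer variables; a target-indexed inner formulation would need to encode $|\sigma_i-\sigma_{i'}|>\tau$ per pair of co-sensed targets, which tends to introduce either $O(n^2k)$ big-M binaries or $O(n^3k)$ window constraints. The position-indexed choice sidesteps this: recharging is a window sum independent of $y$, and the ordering enters only through the single inequality (ii) per triple $(p,i,j)$. Correctness then reduces to checking that (i)--(iii) together characterize exactly the valid sensing plans with respect to $\sigma$, so that the bilevel optimum equals $\max_\sigma\min_{\psi\text{ valid wrt. }\sigma} v(\psi)$, i.e., the Stackelberg equilibrium value.
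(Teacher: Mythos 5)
Your formulation is a genuinely different route from the paper's. The paper keeps the inner level target-indexed (variables $x_{i,j}$ meaning ``$t_i$ is sensed by $s_j$'') and pushes all ordering information into outer-level integer position variables $z_i$ plus $\mathcal{O}(n^2)$ binary indicators $l_{i,i'}$ that record whether $t_i$ precedes $t_{i'}$ by more than $\tau$ positions; the recharging requirement is then enforced by the pairwise coupling constraints $\hat{x}_{i,j}+\hat{x}_{i',j}\leq 1+l_{i,i'}+l_{i',i}$, which is where the $\mathcal{O}(n^2\cdot k)$ count comes from. You instead index the inner level by \emph{position}, encode the ordering by a permutation matrix $y_{i,p}$, keep recharging as a clean window sum independent of the outer variables, and let the ordering enter only through the capability constraints $x_{p,j}+y_{i,p}\le 1+D_{i,j}$. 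Both designs land on the same asymptotic counts, and your constraints (i)--(iii) do characterize exactly the valid sensing plans with respect to the encoded $\sigma$, so the structural idea is sound; your integer variables $\sigma_i$ are never actually used, but that is harmless.

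There is one concrete gap: your lower-level objective $\sum_{p}x_{p,k+1}\cdot\sum_{i}v_i y_{i,p}$ is \emph{bilinear} in the outer variables $y$ and the inner variables $x$. Saying it is ``linear in $x$ once $y$ is fixed'' is not enough for the formulation class the paper targets (a mixed-integer bilevel program with linear objectives and constraints, solvable by tools such as MIBS); in the paper's target-indexed version the objective $\sum_i v_i x_{i,k+1}$ has constant coefficients precisely because the value $v_i$ travels with the target index rather than the position. Your formulation can be repaired within the stated bounds by introducing inner binary variables $w_{i,p}$ with $w_{i,p}\ge y_{i,p}+x_{p,k+1}-1$ and objective $\sum_{i,p}v_i w_{i,p}$ (the inner minimization drives $w_{i,p}$ to the product, and the outer maximization only observes the inner optimum), which adds $\mathcal{O}(n^2)$ binaries and constraints and so stays within $\mathcal{O}(n^2+n\cdot k)$ binary variables and $\mathcal{O}(n^2\cdot k)$ constraints --- but as written the proposal does not address this, and it is the one step that would fail against the paper's own framework.
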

Note that standard techniques to convert this bilevel program into an (integer) linear program, e.g., by exploiting KTT-optimality conditions \cite{allende2013solving,dempe2020bilevel}, are not applicable in our setting, as we are solving an \emph{integer} bilevel program within which the inner-level program is already non-convex.

\subsubsection{Heuristic}
We will see later that the running time for the bilevel formulation of the problem becomes already infeasible on small-sized instances.
Therefore, we experimented with different heuristics to solve the problem.\footnote{Note that the heuristic double-oracle approach that has been successfully employed for other large combinatorial games \cite{DBLP:conf/aaai/AdamHKK21,DBLP:conf/atal/JainKVCPT11} is not applicable to ESGs. Traditionally, the approach successively expands the strategy spaces of both players by letting them best respond to each other. However, in ESGs,  we face a bilevel problem in which there is no best response of the leader to the follower. The approach also fails here because the valid strategies of the follower heavily depend on the strategy picked by the leader.}
In the following, we present two variants of 
simulated annealing-based heuristics that performed best. 
For a target ordering $\sigma$, we denote as $N(\sigma)$ its neighbors, i.e., all  ${n \choose 2}$ orderings that arise from $\sigma$ by swapping the position of any two different targets. 
The \emph{relaxed version} of our simulated annealing (SA\_Relax) is presented in \Cref{alg:SA_Relax}. 
The idea is to find an optimal ordering through repeated local rearrangements.
We store the current ordering as $\sigma$ and compute its value for \texttt{Blue} by solving \texttt{Red}'s best response problem using \Cref{pr:ILP-best}. 
Then, we pick a random neighbor of $\sigma$, compute its value, and update the ordering based on this according to standard simulated annealing~rules.

\begin{algorithm}[t]
\begin{small}
\caption{\textsc{SA\_Relax}}
\label{alg:SA_Relax}
\textbf{Input}: Target ordering $\sigma$ and temperature~$T=100$

\begin{algorithmic}[1] 
\STATE Compute optimal sensing plan $\psi$ wrt. $\sigma$. 
\WHILE{$T > 0.00001$}
\STATE Select a random neighbor $\hat{\sigma} \in N(\sigma)$.\label{line:rand}
\STATE  Compute optimal sensing plan $\psi'$ wrt. $\hat{\sigma}$.
\IF{$e^{\frac{v(\hat{\psi})-v(\psi)}{T}} >$ random[0, 1]}
\STATE $\sigma := \hat{\sigma}$, $\psi:=\hat{\psi}$.
\ENDIF
\STATE $T: = T\cdot 0.9$.
\ENDWHILE
\STATE \textbf{Return} $\sigma$. 
\end{algorithmic}
\end{small}
\end{algorithm}

In the \emph{full version} of our simulated annealing (SA), instead of picking a random neighbor $\hat{\sigma}$ from $N(\sigma)$ in Line \ref{line:rand} of Algorithm \ref{alg:SA_Relax}, we first run a heuristic for \textsc{Best Red Response}
on all orderings from $N(\sigma)$.\footnote{In our (greedy) heuristic, we consider the targets in decreasing order of their value and construct the sensing plan $\psi$ iteratively. Let $S'\subseteq S$ be the sensors $s$ so that $\psi$ remains a valid sensing plan after adding the current target to $\psi(s)$. We let the target be sensed by a randomly selected sensor from $S$ (or by no sensor if $S$ is empty). For a formal description, see our full version \cite{us}.}
Then, on the $\mu$ fraction of neighbors with the highest returned value, we execute the ILP from \Cref{pr:ILP-best} to compute the optimal sensing plan. Of the examined neighbors, we pick the one with the highest returned value as $\sigma'$. As a hyper-parameter tuning process, we tested the performance of our heuristic algorithm with respect to the choice of $\mu$ (see our full version \cite{us} for results). 
It turns out that $\mu = 0.1$ provides a good trade-off between the algorithm's running time and \texttt{Blue}'s utility. Thus, we fix $\mu=0.1$ throughout the paper.
For both heuristics, we always run the heuristic three times with three different initial randomly generated target orderings and return the best computed ordering.  

\section{Experimental Evaluations}\label{sec:exp_bilevel}

We analyze the quality and performance of our algorithms to compute the Stackelberg equilibrium.\footnote{We use Gurobi \cite{gurobi} to solve the ILP from \Cref{pr:ILP-best} and MIBS \cite{tahernejad2020branch} to solve the bilevel program from \Cref{pr:bilevel}. Both are among the most popular off-the-shelf tools for solving the respective problem.} 
We consider three simulated game settings for generating ESGs. For each setting, we
determine the value of a target by drawing a number uniformly within $[0, 1]$\footnote{In our full version \cite{us}, we analyze  supplementary scenarios, reinforcing similar conclusions to those presented~here.}: 

\begin{enumerate}
    \item {\bfseries\scshape Default (Def):} For each $i\in [n]$ and $j\in [k]$, we set $D_{i,j} =1$ with probability $0.2$.
    \item {\bfseries\scshape Euclidean (Euc):}  
    Each target $t_i \in T$ and each sensor $s_j \in S$ are uniformly sampled points in $[0,1] \times [0,1]$. A sensor can sense a target (i.e., $D_{i,j} = 1$) if the Euclidean distance between their points is below $0.3$.
    \item {\bfseries\scshape RandomLevel (Rand):}  
    Each target $t_i\in T$ has a difficulty level $d_i$ uniformly sampled from $[0,1]$, and each sensor $s_j\in S$ has a skill level $s_j$ uniformly sampled from $[0,1]$.
    For each $i\in [n]$ and $j\in [k]$, we set $D_{i,j} =1$ with probability $(1-d_i)\cdot s_j$.
\end{enumerate}
In all our experiments, if not stated otherwise, we average over $50$ instances generated according to one of the models. 
We present our experimental results as tables where each entry contains  \texttt{Blue}'s average utility (i.e., the summed value of not-sensed targets) from the computed target ordering assuming \texttt{Red} best responds and the average running time in seconds in \textit{italics}, both followed by their respective standard deviations.
Note that standard deviations are calculated across the different sampled instances, implying that independent of the solution method some non-trivial standard deviation is to be expected, as certain instances are more favorable for \texttt{Blue} than others. 

We analyze the maximum size of instances that we can solve exactly using the bilevel program, which we denote as OPT. 
We present results for the {\bfseries\scshape Default} game setting in \Cref{tab:scability_bilevel} (results for other simulated game settings are similar).
It turns out that while instances with $5$ targets can be solved within a second by OPT, instances with $9$ targets take already around $9$ hours to solve. This demonstrates that the bilevel program is only usable for quite small instances.
Moreover, we observe the to-be-expected trend that \texttt{Blue}'s utility increases when \texttt{Blue} has more targets or \texttt{Red} has less sensors. However, we do not find any consistent trend regarding whether it is more advantageous for \texttt{Blue}: more targets or fewer sensors. 

Motivated by the high computational cost of the bilevel program, we now turn to analyzing the quality of our heuristics. We also include the \emph{Random} method here as a baseline where \texttt{Blue} simply picks an arbitrary ordering of targets (and \texttt{Red} best responds to it). In addition, we compare our heuristics against a naive random strategy of comparable computational cost. For this, we include the \textit{Random2} method which generates $3000$ random orderings for Table \ref{tab:bilevel_approx1} and $3\cdot 10^7$ random orderings for Table \ref{tab:bilevel_approx2}. The sampled ordering that achieves the highest utility for \texttt{Blue} assuming that \texttt{Red} best responds is returned.

\begin{small}
	\begin{table*}[t]
		\centering
		\begin{minipage}[t]{0.32\linewidth}
			\vspace{0pt}
			\resizebox{\textwidth}{!}{ \begin{tabular}{|c|c|c|c|}
					\hline
					\diagbox{\small  \#targets}{\small  \#sensors} &  2 & 3 & 5\\
					\hline
					5 & \makecell{1.79 $\pm$ 0.71, \\ \textit{0.61 $\pm$ 0.21}}&  \makecell{ 1.55 $\pm$ 0.65, \\ \textit{0.77 $\pm$ 0.02}}& \makecell{ 1.02$\pm$ 0.62, \\ \textit{0.98 $\pm$ 0.04}}\\
					\hline
					7 & \makecell{2.41 $\pm$ 0.78, \\ \textit{102 $\pm$ 36}}&  \makecell{ 2.29 $\pm$ 0.80, \\ \textit{116 $\pm$ 31}}& \makecell{ 1.62$\pm$ 0.72, \\ \textit{140 $\pm$ 29}}\\
					\hline 
					8 & \makecell{2.96 $\pm$ 0.81, \\ \textit{1501 $\pm$ 354}}&  \makecell{ 2.33 $\pm$ 0.74, \\ \textit{1760 $\pm$ 38}}& \makecell{1.7 $\pm$ 0.69, \\ \textit{1814 $\pm$ 23} }\\
					\hline 
					9  & \makecell{ n/a, \\ \textit{31358}}&  \makecell{ n/a, \\ \textit{32541}}& \makecell{ n/a, \\ 35376}\\
					\hline
			\end{tabular}}
			\caption{Scalability test of bilevel-program (OPT) for  {\bfseries\scshape Default} game setting with $\tau=2$. 
				For $n=9$, we report running time for one instance. For all tables: each entry shows \texttt{Blue}'s average utility (top) and running time in seconds (bottom). 
			}
			\label{tab:scability_bilevel}
		\end{minipage}
		\hfill
		\begin{minipage}[t]{0.32\linewidth}
			\vspace{0pt}
			\resizebox{\textwidth}{!}{ \begin{tabular}{|c|c|c|c|}
					\hline
					\diagbox{\small  Algo.}{\small  Setting}   & {\bfseries\scshape Def}  & {\bfseries\scshape Euc} & {\bfseries\scshape Rand}  \\
					\hline 
					OPT & \makecell{\textbf{2.29} $\pm$ 0.80, \\\textit{ 116 $\pm$ 31}} & \makecell{\textbf{1.952} $\pm$ 0.74, \\ \textit{ 126$\pm$ 2.29}} &  \makecell{\textbf{2.09} $\pm$ 0.87, \\ \textit{120 $\pm$ 25}} \\
					\hline 
					SA & \makecell{\textbf{2.29} $\pm$ 0.80, \\ \textit{4.78 $\pm$ 0.38}} & \makecell{1.951 $\pm$ 0.75, \\ \textit{4.96 $\pm$ 0.47}} & \makecell{\textbf{2.09} $\pm$ 0.87, \\ \textit{5.03 $\pm$ 0.69}}\\
					\hline 
					SA\_Relax &  \makecell{\textbf{2.29} $\pm$ 0.80, \\ \textit{0.81 $\pm$ 0.03}} & \makecell{\textbf{1.952} $\pm$ 0.74, \\ \textit{0.84 $\pm$ 0.05}}& \makecell{\textbf{2.09} $\pm$ 0.87, \\ \textit{0.85 $\pm$ 0.08}}\\
					\hline
					Random & \makecell{2.16 $\pm$ 0.84, \\ \textit{0.001}} & \makecell{1.71 $\pm$ 0.83, \\ \textit{0.001}} & \makecell{1.93 $\pm$ 0.92, \\ \textit{0.001}}\\
					\hline
					Random2 & \makecell{\textbf{2.29} $\pm$ 0.80, \\ \textit{5.13 $\pm$ 0.16}} & \makecell{\textbf{1.952} $\pm$ 0.74, \\ \textit{5.25 $\pm$ 0.33 }} & \makecell{\textbf{2.09} $\pm$ 0.87, \\ \textit{5.3 $\pm$ 0.46}}\\
					\hline
			\end{tabular}}
			\caption{Comparison of algorithms to compute \texttt{Blue}'s utility for different simulated game settings, where $n=7$, $k=3$, and $\tau=2$.} 
			\label{tab:bilevel_approx1}
		\end{minipage}\hfill
		\begin{minipage}[t]{0.32\linewidth}
			\vspace{0pt}
			\resizebox{\textwidth}{!}{\begin{tabular}{|c|c|c|c|}
					\hline
					\diagbox{\small  Algo.}{\small  Setting}  & {\bfseries\scshape Def}  & {\bfseries\scshape Euc} & {\bfseries\scshape Rand} \\
					\hline 
					SA  & \makecell{\textbf{15.96} $\pm$ 1.1, \\ \textit{28101 $\pm$ 563}} & \makecell{\textbf{18.4} $\pm$ 2.1, \\ \textit{27755 $\pm$ 928}} & \makecell{\textbf{17.3} $\pm$ 4.2, \\ \textit{27970 $\pm$ 1136}}\\
					\hline
					SA\_Relax  & \makecell{8.76 $\pm$ 0.9, \\  \textit{49.6 $\pm$ 2.57}}& \makecell{10.53 $\pm$ 2.32, \\  \textit{47.5 $\pm$ 0.86}} & \makecell{12.3 $\pm$ 3.6, \\  \textit{49.7 $\pm$ 1.6}} \\
					\hline
					Random   & \makecell{6.19 $\pm$ 1.26,  \\ \textit{0.001}} & \makecell{6.86 $\pm$ 2.25,  \\ \textit{0.001}} & \makecell{9.68 $\pm$ 2.78,  \\ \textit{0.001}}\\
					\hline
					Random2   & \makecell{8.27 $\pm$ 0.73,  \\ \textit{25036 $\pm$ 311}} & \makecell{9.54 $\pm$ 2.45,  \\ \textit{24333 $\pm$ 211}} & \makecell{11.68 $\pm$ 3.58,  \\ \textit{26810 $\pm$ 295}}\\
					\hline
			\end{tabular}}
			\caption{Comparison of algorithms to compute \texttt{Blue}'s utility for different simulated game settings, where $n=75$, $k=10$, and $\tau=5$. We generate $9$ instances per method.} 
			\label{tab:bilevel_approx2}
		\end{minipage}
	\end{table*}
\end{small}

In \Cref{tab:bilevel_approx1},  we show the algorithms' performance for small instances where we can still compute \texttt{Blue}'s maximum utility (OPT) via the bilevel program. 
In \Cref{tab:bilevel_approx2}, we consider larger instances where the optimum value is unknown. 
Note that higher values correspond to a better performance of the algorithm, as we always report \texttt{Blue}'s  utility for \texttt{Red}'s best response.

From the results in \Cref{tab:bilevel_approx1}, we can see that all heuristics perform well on small instances. In particular, SA\_Relax, SA, and Random2 find the optimal solution in all (but one) cases. However, SA\_Relax proves advantageous because it only needs a sixth of the running time of the other two methods.

While our two heuristics SA and SA\_Relax show a similar approximation quality for small instances, for larger instances (\Cref{tab:bilevel_approx2}) SA clearly outperforms SA\_Relax. For the {\bfseries\scshape Default} game setting, using SA compared to SA\_Relax even regularly leads to a doubled utility for \texttt{Blue}. 
While this is a strong argument for using SA, SA's downside is its higher computational cost, needing over $7$ hours to solve instances with $75$ targets. 

Finally, we observe that both methods clearly outperform the Random baseline, with SA consistently preserving an average of approximately 20 more targets for the larger instances.
This highlights that the solution quality of the target ordering clearly increases throughout the simulated annealing. 
Considering Random2, we find that repeatedly sampling orders (instead of only once) leads to a noticeable utility increase. However, on the larger instances, Random2 performs even worse than SA\_Relax while running as long as SA, thereby combining the disadvantages of SA and SA\_Relax.
Overall, our experiments highlight that \texttt{Blue} benefits from ordering the targets strategically instead of randomly.

\section{Escape from Non-Coordinated Sensing}\label{non-coor}

ESGs assume that the different sensors are controlled by a central authority that computes the sensing plan. We now investigate the situation where these sensors are \emph{non-coordinated} and each one acts independently based on a natural greedy algorithm. This happens when sensors cannot easily exchange information and coordinate with each other. Another motivation is when  
 sensors are controlled by different adversaries, each serving only their own interests and being unlikely to coordinate their actions and share their reward.  
 Both of these scenarios can occur in our motivating domain of piracy at large open seas, as coordination between different groups is likely to be challenging. Different pirate groups might even refuse to coordinate at all and instead directly compete with each other.
 
We model these situations by assuming that sensors have a predefined ordering as $s_1,\dots,s_k$ (as induced by fixed locations of the sensors); for each sensor $s\in S$, as soon as a not previously sensed target that $s$ can sense passes $s$ (i.e., $s$ has the capabilities and is currently not recharging), $s$ senses it, thereby greedily maximizing its number of sensed targets.

Formally, given a target ordering $\sigma$, we construct a sensing plan $\psi_\sigma$ sequentially as follows. For each step $\ell\in [n+k-1]$, if target $t_i$ passes sensor $s_j$ in step $\ell$, then we add $t_i$ to $\psi_\sigma(s_j)$ if the resulting sensing plan remains valid with respect to $\sigma$ (formally, for $i\in [n]$ and $j\in [k]$ target $\sigma^{-1}(i)$ passes sensor $s_j$ in step $i+j-1$). 
As the strategy of  \texttt{Red} is fixed, the problem \textsc{Best Blue Response} \texttt{Blue} faces is to pick a target ordering $\sigma$ so that $v(\psi_\sigma)$ gets maximized.
In the following, we study the computational complexity of this problem and solve it in computational experiments. 
By comparing the answer of \textsc{Best Blue Response} to the value of the Stackelberg equilibrium in the corresponding ESG we can ultimately answer how much \texttt{Red} gains from being able to centrally control its sensors. 

\subsection{Algorithmic Analysis}\label{sec:NonCoordAA}

Unfortunately, it turns out that computing \texttt{Blue}'s strategy is NP-hard, even in restricted cases where each sensor can only sense one target. 
Due to the sequential construction of \texttt{Red}'s sensing plan, this reduction is our most intricate one:

\begin{restatable}{theorem}{bestblue}\label{th:bestblue}
\textsc{Best Blue Response} is NP-complete, even if the recharging time is $\infty$, i.e., each sensor can sense only one target, each target has value $1$, and the sum of each row and column in the sensing matrix is at most four. 
\end{restatable}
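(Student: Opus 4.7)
\textbf{Membership} in NP is immediate: given a candidate ordering $\sigma$, we simulate the greedy construction of $\psi_\sigma$ in polynomial time and check whether at least the required number of targets ends up unsensed.

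\textbf{For NP-hardness}, my plan is to reduce from a bounded-occurrence variant of 3-SAT such as 3-Occ-3-SAT (each variable occurs in at most three clauses, which is still NP-hard). Given a formula $\phi$ with variables $x_1,\dots,x_n$ and clauses $C_1,\dots,C_m$, I would build an ESG with unit target values, $\tau=\infty$, and carefully chosen sensor indexing, and ask whether \texttt{Blue} can save at least $B=n+m$ targets. For each variable $x$, I would introduce two \emph{literal targets} $t_x^+,t_x^-$ and a \emph{variable sensor} $s_x$ whose only two neighbors are these literal targets; since $\tau=\infty$, exactly one of the two is consumed by $s_x$, determined by which comes first in $\sigma$, and the other is saved, thereby encoding a truth assignment. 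For each clause $C_j=\ell_{j,1}\lor\ell_{j,2}\lor\ell_{j,3}$, I would introduce a \emph{clause target} $t_{C_j}$ together with three \emph{clause sensors} $s_{C_j,1},s_{C_j,2},s_{C_j,3}$, where $s_{C_j,k}$ senses both $t_{C_j}$ and the literal target associated with $\ell_{j,k}$. The relative indices of variable and clause sensors would be set so that, under greedy's smallest-index-free rule, a clause sensor is snatched by its incident literal target precisely when that literal is set \emph{true}; hence $t_{C_j}$ finds all three of its clause sensors occupied, i.e., is saved, iff $C_j$ is satisfied. Totalling: exactly $n$ literal targets are saved (one per variable) plus one clause target per satisfied clause, reaching $n+m$ iff $\phi$ is satisfiable. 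The row/column bound is respected: each literal target has at most $1+3=4$ neighbors, each clause target has three, and each sensor has degree two.

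The forward direction is then routine: from a satisfying assignment, order the literal targets of each variable so that the variable sensor absorbs the false-side target, and interleave the clause targets after their true literal targets so that each satisfied clause has one of its clause sensors pre-consumed. \textbf{The main obstacle} is the backward direction. Unlike in \Cref{redresponse-hardness}, \texttt{Red} is not optimizing here but executing a rigid sequential greedy whose outcome cascades across all of \texttt{Blue}'s ordering choices in a way that is globally hard to reason about. I must rule out ``cheating'' orderings that manage to save more than $n+m$ targets in the absence of a satisfying assignment---e.g., orderings that sacrifice one clause target but rescue two literal targets elsewhere. Achieving this will likely require inserting auxiliary \emph{filler} targets and \emph{dummy} sensors (in the spirit of the padding gadgets used in the proof of \Cref{redresponse-hardness}) to (i) guarantee that within each variable gadget the variable sensor is the one that necessarily absorbs one of the two literal targets, and (ii) fix the temporal order in which variable gadgets ``resolve'' relative to the clause targets so that no clause target can be saved ``for free''. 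Pinning down the exact indexing scheme and the padding that makes the backward invariants hold tight is, I expect, where the intricacy cited by the theorem truly lies.
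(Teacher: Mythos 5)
There is a genuine flaw in your clause gadget: it encodes the wrong Boolean connective. You introduce three clause sensors $s_{C_j,1},s_{C_j,2},s_{C_j,3}$ and declare $t_{C_j}$ saved iff \emph{all three} are occupied by their incident literal targets, which would require all three literals of $C_j$ to be true --- a conjunction, whereas clause satisfaction is a disjunction. The paper's construction avoids this by giving each clause a \emph{single} clause sensor that is capable of sensing the clause target \emph{and} all three literal targets of the clause; the clause target then survives iff \emph{at least one} literal target occupies that sensor, which is the correct semantics. Your accounting of the budget $B=n+m$ is also inconsistent: you want the true-side literal target of each variable both to be \emph{saved} (to contribute to the count of $n$) and to \emph{occupy} a clause sensor (to protect the clause target), but a target cannot do both --- occupying a sensor means being sensed. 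Moreover, since a literal may occur in several clauses, its single target can occupy at most one clause sensor, so even the conjunctive version of your gadget fails when a literal is shared. The paper sidesteps both issues by adding a dedicated \emph{catch} sensor for each literal target (so no literal target ever survives and none enters the budget), and by counting only clause targets plus two dummy targets per variable ($\ell=|C|+2|X|$); the per-variable dummy-target/dummy-sensor gadget is what enforces consistency, i.e., that you cannot simultaneously use a positive and the negative literal target of the same variable to cover clause sensors.

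Your instinct that the backward direction is where the difficulty lies is right, and your NP-membership argument and the idea of reducing from a bounded-occurrence SAT variant (the paper uses the variant with exactly two positive and one negative occurrence per variable) are fine. But as written the reduction does not establish the claimed equivalence even in the forward direction, so the proposal needs the clause gadget and the budget redesigned along the lines above before the indexing and padding questions you raise even arise.
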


\begin{proof}[Proof Sketch]
We focus on the variant where each sensor can only sense one target. 
Interestingly, as discussed in more detail in our full version \cite{us} this problem shares some similarities with the NP-hard \textsc{Minimum Maximal Matching} problem, as we can view the sensors and targets as two sides of a bipartite graph with sensor-target pairs where the sensor senses the target corresponding to maximal matchings in this graph. 
However, the ordering of the sensors makes only certain maximal matchings in these graphs realizable, which is why we instead show NP-hardness by reducing from a variant of 3-SAT where each variable appears only twice positive and once negative. 
The core idea of our construction is the following: We add a \emph{literal target} for each literal. Moreover, for each clause, we add a \emph{clause sensor} and a \emph{clause target}. The clause sensor is capable of sensing the corresponding clause target as well as targets corresponding to the three literals appearing in the clause. 
We add further targets and sensors to the instance so that all clause targets need to make it unsensed through the channel.  
This implies that each clause sensor needs to sense a literal target as it will otherwise sense the corresponding clause target in passing, i.e., we need to ``cover'' each clause with a literal appearing in the clause. 
Now for each variable, we add a slightly intricate gadget that ensures that we can either use the targets corresponding to positive literals to cover clause sensors (which corresponds to setting the variable to true) or the one target corresponding to a negative literal (which corresponds to setting the variable to false). Because we need to ``cover'' each clause, the induced assignment is satisfying. 
\end{proof}

We can adopt a similar view as in \Cref{pr:redresponseDP} to solve the problem via dynamic programming. 
However, this time the dynamic programming iteratively constructs the optimal target ordering and we need to keep track of the previously used targets together with the sensors used in the last $\tau+1$ timesteps. This results in a naive running time of $\mathcal{O}(n\cdot 2^n\cdot (k+1)^{\tau+2})$, which can be improved to  $\mathcal{O}\left(n_\chi\cdot \left(\prod_{i=1}^{n_\chi} (\ell_i+1)\right)\cdot (k+1)^{\tau+2}\right)$ if we incorporate types:
\begin{restatable}{proposition}{DBGreedy}\label{bbr:DP}
\textsc{Best Blue Response} is solvable in $\mathcal{O}\left(n_\chi\cdot\left(\prod_{i=1}^{n_\chi} (\ell_i+1)\right)\cdot (k+1)^{\tau+2}\right)$, where $\ell_i$ is the number of targets of type $\gamma_i$.
\end{restatable}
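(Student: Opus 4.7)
The plan is to mirror the dynamic program of \Cref{pr:redresponseDP}, but with the roles swapped: rather than iterating over a fixed target ordering and choosing each sensor's assignments, I iteratively construct \texttt{Blue}'s ordering from position $1$ to $n$ and fold in \texttt{Red}'s response, which is now deterministic because each sensor acts greedily. At each step the DP decides which target \emph{type} to place next and then simulates the induced greedy sensing of that target.

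The DP state has two components. First, a count vector $(c_1,\dots,c_{n_\chi})$ recording how many targets of each type have already been placed, contributing $\prod_{i=1}^{n_\chi}(\ell_i+1)$ possibilities. Second, a ``history'' recording, for each of the last $\tau+1$ positions in the partial ordering, which sensor (if any) sensed the target at that position; this contributes at most $(k+1)^{\tau+1}$ possibilities and fully determines which sensors are still recharging when the next target enters the channel. To transition from a cell, I try each type $\gamma$ with $c_\gamma<\ell_\gamma$ and simulate the new target passing $s_1,\dots,s_k$ in order---at each $s_j$ checking capability (via the sensing matrix restricted to types) and recharging (via the history)---in $\mathcal{O}(k)$ time, then shift the entry for the new position into the history and add $v(\gamma)$ to the accumulated value if the target escaped. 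Multiplying the state count by the $n_\chi$ next-type choices and the $\mathcal{O}(k)$ simulation cost collapses to the claimed $\mathcal{O}\bigl(n_\chi\cdot\prod_{i=1}^{n_\chi}(\ell_i+1)\cdot(k+1)^{\tau+2}\bigr)$ bound, with the extra factor of $k$ absorbed into the exponent.

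The main obstacle is the type-equivalence argument that justifies tracking only counts rather than the full multiset of placed targets. I need to show that swapping two same-type targets within any ordering preserves the greedy sensing plan (and hence $v(\psi_\sigma)$): at each step of the sequential construction of $\psi_\sigma$, the decision of whether a sensor senses the current target depends only on the sensor's capability with respect to the target's \emph{type} and on whether the sensor has recently sensed (a test that is itself type-oblivious), so the same induced history and same sensing outcomes are reached under the swap. This licenses collapsing same-type prefixes to counts. A minor bookkeeping point is the initialization of the history before position~$1$, which I handle by padding with $\tau+1$ ``none'' entries; the optimum is then read off by maximizing over all terminal cells whose count vector equals $(\ell_1,\dots,\ell_{n_\chi})$.
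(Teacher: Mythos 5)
Your proposal is correct and takes essentially the same approach as the paper: a dynamic program over states consisting of the per-type counts of already-placed targets together with the sensors (or $\emptyset$) that sensed the last $\tau+1$ targets, with transitions that append one target of a chosen type and simulate the deterministic greedy response. The only differences are presentational—you run the transition as a forward simulation over $s_1,\dots,s_k$ while the paper enumerates the resulting history and checks which types are consistent with it, and you make explicit the same-type-swap lemma that the paper leaves implicit in its correctness invariant.
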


\subsubsection{ILP Formulation}\label{sec:BestILP}
Constructing an ILP for \textsc{Best Blue Response} turns out to be slightly more challenging, as we need to encode \texttt{Red}'s greedy sequential behavior:
\begin{restatable}{proposition}{bestblueILP}\label{pr:bestblueILP}
    There is an ILP formulation for \textsc{Best Blue Response} with  $\mathcal{O}(n^2\cdot k)$ binary variables, $\mathcal{O}(n)$ integer variables, and $\mathcal{O}(n^2\cdot k)$ constraints.
\end{restatable}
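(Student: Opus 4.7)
My plan is to follow the blueprint of Proposition~\ref{pr:ILP-best}, but with the target ordering promoted from input to decision and augmented with constraints that \emph{force} \texttt{Red}'s greedy behaviour. I would introduce binary permutation variables $x_{i,p}$ with the standard assignment equalities $\sum_p x_{i,p}=1$ and $\sum_i x_{i,p}=1$, together with position-indexed sensing variables $y_{p,j}$ meaning ``the target at position $p$ is sensed by $s_j$''. Integer variables $\sigma_i:=\sum_p p\cdot x_{i,p}$ track the position of each target for readability. Validity of the resulting sensing plan is then captured by (i) $y_{p,j}\le \sum_i D_{i,j}\,x_{i,p}$ (capability), (ii) $\sum_j y_{p,j}\le 1$ (each position is sensed at most once), and (iii) $\sum_{p'=\max(1,p-\tau)}^{p} y_{p',j}\le 1$ for every $p,j$ (recharging).

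The hard part, and what I expect to be the main obstacle, is encoding \texttt{Red}'s \emph{forced} greedy behaviour using only linear inequalities: $s_j$ \emph{must} sense the target at position $p$ whenever (a) $s_j$ is capable, (b) $s_j$ is not recharging, and (c) the target has not already been captured by some $s_{j'}$ with $j'<j$. Writing
\[
c_{p,j}:=\sum_i D_{i,j}\,x_{i,p},\qquad b_{p,j}:=\sum_{j'<j}y_{p,j'},\qquad r_{p,j}:=\sum_{p'=\max(1,p-\tau)}^{p-1}y_{p',j},
\]
I would add the single inequality $y_{p,j}+b_{p,j}+r_{p,j}\ge c_{p,j}$ for every $p,j$. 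Constraints (ii) and (iii) force $b_{p,j},r_{p,j}\in\{0,1\}$, so this inequality is slack whenever $s_j$ is blocked ($b_{p,j}=1$ or $r_{p,j}=1$) and reduces to $y_{p,j}\ge 1$ exactly when the greedy trigger $(c_{p,j}=1,\,b_{p,j}=0,\,r_{p,j}=0)$ fires. Combined with the upper bounds in (i)-(iii), $y_{p,j}$ is thereby pinned to its greedy value, so feasible $(x,y)$ correspond bijectively to target orderings paired with the unique sensing plan that \texttt{Red}'s greedy rule would execute.

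It remains to linearize \texttt{Blue}'s objective, which equals $\sum_i v_i$ minus the sensed value $\sum_p(\sum_j y_{p,j})(\sum_i v_i\,x_{i,p})$. The only nonlinear piece is the product of the binary $x_{i,p}$ with the binary $\sum_j y_{p,j}$, so I introduce auxiliary binaries $w_{i,p}$ with the standard product linearization
\[
w_{i,p}\le x_{i,p},\qquad w_{i,p}\le\sum_{j}y_{p,j},\qquad w_{i,p}\ge x_{i,p}+\sum_{j}y_{p,j}-1,
\]
so that $w_{i,p}=1$ iff $t_i$ sits at position $p$ and is sensed. The objective is then the linear $\max\;\sum_i v_i-\sum_{i,p}v_i\,w_{i,p}$. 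Tallying the inequalities yields the stated bounds of $\mathcal{O}(n^2\cdot k)$ binary variables, $\mathcal{O}(n)$ integer variables, and $\mathcal{O}(n^2\cdot k)$ constraints. A short correctness argument at the end would formalize the bijection described above, reducing entirely to verifying the greedy-forcing claim; that claim is the delicate step, since an off-by-one in the recharging window or the ``previous sensors'' sum would either leave $y_{p,j}$ free when it should be forced, or make the LP infeasible on instances where the true greedy response conflicts with (i)-(iii).
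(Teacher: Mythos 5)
Your formulation is correct, but it takes a genuinely different route from the paper's. The paper keeps target-indexed sensing variables $x_{i,j}$ (``$t_i$ is sensed by $s_j$'', with $j=k+1$ for escape) together with integer position variables $z_i$, and then needs (a) big-$M$ constraints on $|z_{i'}-z_i|$ to express recharging, (b) $\mathcal{O}(n^2\cdot k)$ auxiliary ``covering'' variables $y_{i,i',j}$ recording that target $i$ occupies sensor $j$ while $i'$ passes, and (c) a big-$M$ counting inequality over all sensors preceding $j$ to force greediness. You instead use a position-indexed encoding: a permutation matrix $x_{i,p}$ plus variables $y_{p,j}$ attached to positions rather than targets. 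This makes recharging a plain window constraint and lets you force the greedy rule with the single clean inequality $y_{p,j}+b_{p,j}+r_{p,j}\ge c_{p,j}$, whose correctness follows by induction on $(p,j)$ exactly as you sketch --- given $x$, the feasible $y$ is unique and equals the greedy plan. What each approach buys: the paper's target-indexed variables keep the objective linear for free, whereas you must pay with a (standard) product linearization $w_{i,p}$; in exchange you avoid big-$M$ constraints entirely, which typically yields a tighter LP relaxation, and your constraint count is actually $\mathcal{O}(n^2+n\cdot k)$, within the stated $\mathcal{O}(n^2\cdot k)$ bound. Your two flagged worries (off-by-one in the recharging window, and the range of the ``previous sensors'' sum) are both resolved correctly as written: the window $[\,p-\tau,p\,]$ has $\tau+1$ positions, matching the validity condition $|\sigma(t)-\sigma(t')|>\tau$, and a target at position $p$ reaches $s_{j'}$ strictly before $s_j$ whenever $j'<j$, so $b_{p,j}$ sums over exactly the right set.
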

\begin{proof}[Proof Sketch]
    We introduce for each target $i\in [n]$ an integer variable $z_i$ encoding the position in which the target appears. 
    Moreover, similar to  \Cref{pr:ILP-best}, for each $i\in [n]$ and $j\in [k+1]$, we add a binary variable $x_{i,j}$, which encodes whether $t_i$ is sensed by sensor $s_j$ or whether the target makes it unsensed through the channel (for $j=k+1$).
    We can add mostly straightforward constraints to ensure that $x_{i,j}$ respects recharging times. 
    The main challenge is to encode the greedy behavior of the sensors (i.e., the ILP cannot have the freedom to pick the $x_{i,j}$ values arbitrarily to optimize \texttt{Blue}'s utility but they are set according to sensors' greedy behavior). 
    For this, for each $i,i'\in [n]$ and $j\in [k]$, we add a binary variable $y_{i,i',j}$ and add constraints so that $y_{i,i',j}$  is equal to one if target $i$ is sensed by sensor $j$ and because of this $j$ recharges when $i'$ is passing, i.e., $i$ ``covers'' $i'$. 

To encode sensors' greedy behavior, we want to add a constraint that makes sure that in case $x_{i,j}=1$, the target needs to be covered by other targets for all sensors that are capable of sensing it placed before $j$. Note that this together with another constraint ($\sum_{j\in [k+1]} x_{i,j}=1$) in particular implies that each target is sensed by the first sensor it passes which is not recharging, thereby encoding the greedy behavior of sensors. 
Specifically, for each $i\in [n]$ and $j\in [k+1]$, we add: 
\begin{small}
\begin{align} \label{eq:prob}
    \sum_{t\in [j-1]: D_{i,t}=0} 1+ &\sum_{t\in [j-1]:D_{i,t}=1} \sum_{i'\in [n]} y_{i',i,t} -(j-1)\\ \nonumber &  \geq -n(1-\sum_{t=j}^{k+1} x_{i,t}).
\end{align}
\vspace{-0.6cm}
\end{small}
\end{proof}

\subsubsection{Heuristic}

Since it will turn out that the ILP formulation cannot quickly solve medium-to-large instances, we explore various simulated annealing-based heuristics, similar to the approach discussed in Section \ref{sec:exp_bilevel}. 
We present the variant SA\_Relax where a random neighbor is picked in \Cref{alg:SA_Blue_Best_Response}.
The other variant SA computes \texttt{Blue}'s utility $v(\psi_{\hat{\sigma}})$ for all neighbors and picks the one with the highest utility.  
\begin{algorithm}[t]
\caption{SA\_Relax for \textsc{Best Blue Response}}
\label{alg:SA_Blue_Best_Response}
\begin{small}
\textbf{Input}: Initial target ordering $\sigma$ and temperature $T=100$

\begin{algorithmic}[1] 
\WHILE{$T > 0.00001$}
\STATE Select a random neighbor $\hat{\sigma} \in N(\sigma)$.
\IF{$e^{\frac{v(\psi_{\hat{\sigma}})-v(\psi_{\sigma})}{T}} >$ random[0, 1]}
\STATE $\sigma := \hat{\sigma}$.
\ENDIF
\STATE $T := T\cdot 0.9$.
\ENDWHILE
\STATE \textbf{Return} $\sigma$. 
\end{algorithmic}
\end{small}
\end{algorithm}
 
\subsection{Experiments}

We reuse the general setup described in \Cref{sec:exp_bilevel}, but naturally now report \texttt{Blue}'s computed utility assuming that sensors act greedily. Here, we let the Random2 method generate $1000$ random orderings in Table \ref{tab:blue_best_response_approx1} and $5\cdot 10^5$ random orderings in Table \ref{tab:blue_best_response_approx2}.

First of all, we evaluate the scalability of our ILP for \textsc{Best Blue Response} (OPT) in Table \ref{tab:scability_blue_team_response}. The ILP can solve the problem for medium-sized instances with up to $25$ targets in a few minutes. However, due to the complexity of the ILP modeling, already for $25$ targets as soon as the number of sensors reaches $5$, instances can take more than $5$ hours to solve. This is why the last line of the table only reports the running time for one instance.

\begin{small}
	\begin{table*}[t]
		\centering
		\begin{minipage}[t]{0.32\linewidth}
			\vspace{0pt}
			\resizebox{\textwidth}{!}{ \begin{tabular}{|c|c|c|c|}
					\hline
					\diagbox{\small  \#targets}{\small  \#sensors} &  2 & 3 & 5\\
					\hline
					5 & \makecell{2 $\pm$ 0.68, \\ \textit{0.007 $\pm$ 0.005}}&  \makecell{ 1.71 $\pm$ 0.57, \\ \textit{0.009 $\pm$ 0.005}}& \makecell{ 1.32$\pm$ 0.52, \\ \textit{0.01 $\pm$ 0.003}}\\
					\hline 
					15  & \makecell{ 6.6 $\pm$ 1, \\ \textit{0.14 $\pm$ 0.35}}&  \makecell{ 6.29 $\pm$ 0.9, \\ \textit{0.32 $\pm$ 0.88}}& \makecell{ 5.46 $\pm$ 0.87, \\ \textit{6.85} $\pm$ 22.9}\\
					\hline
					20  & \makecell{ 9.05 $\pm$ 1.13, \\ \textit{0.52 $\pm$ 2.94}}&  \makecell{ 8.49 $\pm$ 1.01, \\ \textit{1.58 $\pm$ 4.4}}& \makecell{ 7.56 $\pm$ 1.01, \\ \textit{229 $\pm$ 1261}}\\
					\hline
					25  & \makecell{ 11.38 $\pm$ 1.26, \\ \textit{6.8 $\pm$ 30.6}}&  \makecell{ 10.96 $\pm$ 1.29, \\ \textit{283 $\pm$ 1771}}& \makecell{n/a, \\ \textit{22537}}\\
					\hline
			\end{tabular}}
			\caption{Scalability test of ILP (OPT) for {\bfseries\scshape Default} game setting with $\tau=2$. For all tables: each entry shows \texttt{Blue}'s average utility (top) and running time in seconds (bottom).}
			\label{tab:scability_blue_team_response}
		\end{minipage}
		\hfill
		\begin{minipage}[t]{0.32\linewidth}
			\vspace{0pt}
			\resizebox{\textwidth}{!}{
				\begin{tabular}{|c|c|c|c|}
					\hline
					\diagbox{\small  Algo.}{\small  Setting}  & {\bfseries\scshape Def}  & {\bfseries\scshape Euc} & {\bfseries\scshape Rand}  \\
					\hline
					OPT & \makecell{ \textbf{3.1}$\pm$ 0.86, \\ \textit{0.15 $\pm$ 0.37}} & \makecell{\textbf{3.25}$\pm$ 0.79, \\ \textit{0.25 $\pm$ 0.71}} &  \makecell{\textbf{3.04} $\pm$ 0.78, \\ \textit{3.26 $\pm$ 8.77}} \\
					\hline
					SA & \makecell{2.83 $\pm$ 0.81, \\ \textit{0.7$\pm$ 0.02}} & \makecell{2.92 $\pm$ 0.81, \\ \textit{0.71 $\pm$ 0.02}} & \makecell{2.72 $\pm$ 0.8, \\ \textit{0.71 $\pm$ 0.018}}\\
					\hline 
					SA\_Relax &  \makecell{3.06 $\pm$ 0.88, \\ \textit{0.02}} & \makecell{3.17 $\pm$ 0.81, \\ \textit{0.02}}& \makecell{2.89 $\pm$ 0.82, \\ \textit{0.02}}\\
					\hline
					Random & \makecell{1.9 $\pm$ 0.72, \\ \textit{0.001}} & \makecell{2.15 $\pm$ 0.9, \\ \textit{0.001}} & \makecell{1.9 $\pm$ 0.9, \\ \textit{0.001}}\\
					\hline
					Random2 & \makecell{2.91 $\pm$ 0.87, \\ \textit{0.81 $\pm$ 0.0004}} & \makecell{3.11 $\pm$ 0.79, \\ \textit{0.81 $\pm$ 0.0004}} & \makecell{2.89 $\pm$ 0.78, \\ \textit{0.83 $\pm$ 0.0004}}\\
					\hline
			\end{tabular}}
			\caption{Comparison of algorithms for \textsc{Best Blue Response} for different game settings, where $n=10$, $k=5$, and $\tau=2$.
			}
			\label{tab:blue_best_response_approx1}
		\end{minipage}
		\hfill
		\begin{minipage}[t]{0.32\linewidth}
			\vspace{0pt}
			\resizebox{\textwidth}{!}{\begin{tabular}{|c|c|c|c|}
					\hline
					\diagbox{\small  Algo.}{\small  Setting}   & {\bfseries\scshape Def}  & {\bfseries\scshape Euc} & {\bfseries\scshape Rand}  \\
					\hline 
					SA &  \makecell{\textbf{16.57} $\pm$ 1.64, \\ \textit{485 $\pm$ 20}} & \makecell{\textbf{17.2} $\pm$ 2.4, \\ \textit{470 $\pm$ 9.6}}& \makecell{\textbf{17.54} $\pm$ 3.27, \\ \textit{503 $\pm$ 20}}\\
					\hline
					SA\_Relax & \makecell{12.3 $\pm$ 1.58, \\ \textit{3.14 $\pm$ 0.35}} & \makecell{13.11 $\pm$ 2.62, \\ \textit{2.84 $\pm$ 0.2}} & \makecell{14.47 $\pm$ 2.96, \\ \textit{2.95 $\pm$ 0.2}}\\
					\hline 
					Random & \makecell{9.2 $\pm$ 1.99, \\ \textit{0.001}} & \makecell{10.02 $\pm$ 2.77, \\ \textit{0.001}} & \makecell{11.67 $\pm$ 3.07, \\ \textit{0.001}}\\
					\hline
					Random2 & \makecell{12.75 $\pm$ 1.03, \\ \textit{458 $\pm$ 15}} & \makecell{12.98 $\pm$ 2.37, \\ \textit{437 $\pm$ 13 }} & \makecell{14.67 $\pm$ 3.18, \\ \textit{496 $\pm$ 19}}\\
					\hline
			\end{tabular}}
			\caption{Comparison of algorithms for \textsc{Best Blue Response} for different game settings, where $n=75$, $k=10$, and $\tau=5$. 
			}
			\label{tab:blue_best_response_approx2}
		\end{minipage}
	\end{table*}
\end{small}

Next, we analyze the solution quality of our heuristic approaches. On small instances presented in \Cref{tab:blue_best_response_approx1}, our best heuristic algorithm approximates the optimal solution quite well and the error is typically below $10\%$ with the SA\_Relax method consistently outperforming SA. 
Both heuristics outperform Random, while Random2 performs better than SA (yet still worse than SA\_Relax, while having a much longer running time).
When moving to larger instances in \Cref{tab:blue_best_response_approx2}, the picture flips, as SA is now substantially outperforming SA\_Relax. This shows a general trend that the solution quality of SA scales more favorably than that of SA\_Relax (while the opposite is naturally true for the running time). 
The heuristics again clearly outperform Random, with SA sensing approximately $15$ more targets. Random 2 performs similarly to the suboptimal heuristic SA\_Relax, while being slower by a factor of more than $100$.

Finally, we are interested in exploring the power of coordination for \texttt{Red}, i.e., the difference between the optimal utility \texttt{Blue} gets in the non-coordinated setting explored in this section compared to its utility in the Stackelberg equilibria from \Cref{sec:exp_bilevel}. 
We find that for the small instances where we can compute the Stackelberg equilibrium exactly \texttt{Red} can reduce \texttt{Blue}'s utility by $10\%$ to $20\%$ through coordination. 
For larger instance sizes, we no longer know the optimal solutions, which is why we resort to comparing the results of the respective SA heuristics. 
We find that for larger instances, the gap decreases with \texttt{Red} being only able to decrease \texttt{Blue}'s utility by $5\%$ through coordination in the instances from the {\bfseries\scshape Default} setting underlying  \Cref{tab:bilevel_approx2}. 
In our full version \cite{us}, we show that when \texttt{Red}'s sensors are capable of sensing more targets, coordination is more important sometimes leading to halving \texttt{Blue}'s utility.

\section{Conclusion}
By introducing Espace Sensing Games, we initiated the study of a new class of games concerned with target arrangement and motivated by security applications.  We showed that while the worst-case computational complexity of ESGs is prohibitive, our presented algorithms still have a good performance in experiments. 

There are multiple directions for future work emanating from our work.
First, pinpointing the precise complexity of computing Stackelberg equilibria remains a concrete open question. 
Second, there are other variants of ESGs beyond those studied by us. For instance, it would be possible to merge the settings studied in \Cref{sec:ESG,non-coor} into a game where sensors act greedily but \texttt{Red} can control the ordering of the sensors. In this game variant where both \texttt{Red} and \texttt{Blue} need to pick orders, it would also be possible to study simultaneous play or Stackelberg equilibria where \texttt{Red} moves first. 
Lastly, there are various other target arrangement problems to be studied. One example could be a game where \texttt{Blue} needs to place targets on a grid and \texttt{Red} cannot sense any two targets placed close to each~other.

\begin{ack}
This work was supported by the Office of Naval Research (ONR) under Grant Number N00014-23-1-2802. The views and conclusions contained in this document are those of the authors and should not be interpreted as necessarily representing the official policies, either expressed or implied, of the Office of Naval Research or the U.S. Government.
\end{ack}

\bibliographystyle{plain}

\clearpage

\appendix

\section{Additional Material for \Cref{sec:compRed}}

\redresponseDP*
\begin{proof}
For our dynamic program, we create a table $J[i,b_1,\dots,b_{\tau+1}]\in \mathbb{N}$ for each $i\in [n]$ and $b_1,\dots,b_{\tau+1}\in S\cup \{\emptyset\}$. 
$J[i,b_1,\dots,b_{\tau+1}]$ stores the minimum value of surviving target from $t_{1},\dots, t_{i}$ induced by a valid sensing plan $\psi$ that for each $j\in [\max(1,i-\tau),i]$ assigns target $t_{j}$ to sensor $b_{i-j+1}$  if $b_j\neq \emptyset$ and to no sensor if $b_j=\emptyset$. 
In this case, we say that $\psi$ witnesses the table entry.
If no such sensing plan exists, $J[i,b_1,\dots,b_{\tau+1}]$ is $\infty$. 
Clearly, 
the answer to the problem is $\min_{b_1,\dots,b_{\tau+1}\in S\cup \{\emptyset\}}J[n,b_1,\dots,b_{\tau+1}]$. 

For the initialization, we set an entry $J[i,b_1,\dots,b_{\tau+1}]$  to $0$ if $i=0$ and to $\infty$ otherwise. 
Now, we update the table for increasing $i=0,1,\dots,n$ by filling $J[i,b_1,\dots,b_{\tau+1}]$ as follows. We start by assuming that $b_1\neq \emptyset$ implying that $t_i$ is to be sensed by $b_1$. In this case, we set the entry to $\infty$ if  $\exists \ell\in [2,\tau+1]: b_1=b_{\ell}$ (i.e., the recharging period of $b_1$ would be violated) or if $b_1$ is not capable of sensing $t_i$. 
In both cases, there is no sensing plan to witness the entry.
Otherwise, we update the entry as: 
$$J[i,b_1,\dots,b_{\tau+1}]=\min_{s\in S\cup \{\emptyset\}} J[i-1,b_2,\dots,b_{\tau+1},s].$$
We claim that in case an update of the entry happens here, it is correct.
For this, let $s^*\in S\cup \{\emptyset\}$ be the minimizer of the right hand side and $\psi^*$ the plan witnessing  $J[i-1,b_2,\dots,b_{\tau+1},s^*]$. Then, we can extend $\psi^*$ to a sensing plan $\psi'$ by adding $t_i$ to $\psi^*(b_1)$. Plan $\psi'$ is valid and a witness for the updated entry, since $\psi^*$ is valid and we ruled out above that $b_1$ is either not ready or capable of sensing $t_i$.

Lastly, it remains to consider the case $b_1=\emptyset$, where $t_i$ is not assigned to any sensor. In this case,  we update the entry as: 
$$J[i,b_1,\dots,b_{\tau+1}]=\min_{s_\in S\cup \{\emptyset\}} J[i,b_2,\dots,b_{\tau+1},s]+v_i.$$
Analogous to above, in case an update happens assume that  $s^*\in S\cup \{\emptyset\}$ is the minimizer of the right-hand side and $\psi^*$ the plan witnessing  $J[i-1,b_2,\dots,b_{\tau+1},s^*]$. 
Then $\psi^*$ is also a witness for the updated entry. 

The algorithm runs in $\mathcal{O}(n\cdot (k+1)^{\tau+2})$, as the table contains $\mathcal{O}(n\cdot (k+1)^{\tau+1})$ entries and for each entry we need to take the minimum over $k+1$ values. 

For $i\in [k_\chi]$, let $\ell_i$ be the number of sensors of type $\theta_i$.
To extend the algorithm to sensor types, we need to prove the following lemma that allows us to collapse all sensors of the same type $\theta_i$ into a ``meta''-sensor, which can sense $\ell_i$ many targets in each $\tau$-time window:
\begin{lemma}
There is a valid sensing plan of value $v$ if and only if there is a sensing plan of value $v$ where sensor capabilities are respected and for each $j\in [1,n-\tau]$ at most $\ell_i$ targets from $\{t_j,\dots, t_{j+\tau}\}$ are assigned to a target of type $\theta_i$ for $i\in [k_\chi]$.
\end{lemma}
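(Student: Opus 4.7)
The plan is to prove the two directions separately, with all the real work lying in the backward direction. For the forward direction, I would take any valid sensing plan $\psi$ and argue that it already satisfies the claimed property: in any window $\{t_j,\dots,t_{j+\tau}\}$, each individual sensor can sense at most one target (by the recharging constraint $|\sigma(t)-\sigma(t')|>\tau$), so the $\ell_i$ sensors of type $\theta_i$ collectively absorb at most $\ell_i$ targets from that window. Sensor capabilities are obviously respected since $\psi$ is valid. This direction should be one or two sentences.

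For the backward direction, I need to take a ``type-level'' assignment $\hat\psi$ that assigns each sensed target to some type $\theta_i$ (respecting capabilities and the $\ell_i$-per-window cap) and construct from it an actual valid sensing plan $\psi$ of the same value. The key construction is a round-robin inside each type: fix a type $\theta_i$, list the targets assigned to $\theta_i$ in temporal order as $t_{j_1},t_{j_2},\dots,t_{j_q}$, and send $t_{j_p}$ to the sensor with index $((p-1)\bmod \ell_i)+1$ among the $\ell_i$ sensors of type $\theta_i$. Because all sensors of a single type have identical capabilities, sensing capabilities are automatically respected.

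The main thing to verify, and the only mildly non-trivial step, is that this round-robin respects recharging times. Suppose some sensor is assigned both $t_{j_a}$ and $t_{j_b}$ with $a<b$ and $a\equiv b\pmod{\ell_i}$; then $b\ge a+\ell_i$, so the targets $t_{j_a},t_{j_{a+1}},\dots,t_{j_b}$ form a set of at least $\ell_i+1$ type-$\theta_i$ targets. If, for contradiction, $j_b-j_a\le \tau$, then all of them lie inside the single window $\{t_{j_a},\dots,t_{j_a+\tau}\}$, and that window would contain $\ge \ell_i+1$ targets assigned to type $\theta_i$, contradicting the hypothesis. Hence $j_b-j_a>\tau$, so the recharging constraint for that sensor is satisfied. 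Iterating over all types and all sensors yields a valid sensing plan with exactly the same set of sensed targets and hence the same value $v$.

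The hard part is really just spotting the round-robin construction and the pigeonhole-style window argument; everything else is bookkeeping. Once this lemma is in hand, the dynamic program of \Cref{pr:redresponseDP} can be modified so that each cell remembers, for the last $\tau+1$ steps, only the \emph{type} (rather than identity) of the sensor used, plus a running check that no type $\theta_i$ is used more than $\ell_i$ times in a $\tau+1$-window; this shrinks the per-cell state from $(k+1)^{\tau+1}$ to $(k_\chi+1)^{\tau+1}$, which is exactly the claimed improvement.
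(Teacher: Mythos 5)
Your proposal is correct and follows essentially the same route as the paper's proof: the forward direction is the same pigeonhole observation, and your round-robin assignment within each type is exactly the paper's ``assign to the least-recently-used sensor of that type'' rule, with the same window/pigeonhole contradiction establishing that recharging times are respected.
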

\begin{proof}
     Let $\psi$ be a valid sensing plan of value $v$. We claim that $\psi$ also fulfills the second condition. Assume for the sake of contradiction that there is a $\theta_i$ and some $j\in [1,n-\tau]$ so that more than $\ell_i$ targets from $\{t_j,\dots, t_{j+\tau}\}$ are assigned to a target of type $\theta_i$. Then, by the pigeonhole principle, at least one sensor of type $\theta_i$ needs to sense two targets within $\tau$ steps, rendering the plan invalid. 

    For the reverse direction, assume $\psi$ is the sensing plan of value $v$ respecting the condition. For each sensor type $\theta_i$, let $T_\psi(\theta_i)$ be the targets sensed by sensors of type $\theta_i$ in $\psi$. 
    We construct a sensing plan $\psi'$ as follows. 
    For each sensor type $\theta_i$, we iterate over the targets in $T_\psi(\theta_i)$ according to their position in the target ordering and always assign a target to the sensor of type $\theta_i$ who has not sensed a target for the longest time. 
    $\psi'$ has the same value as $\psi$ and respects sensor capabilities, so it remains to argue that the recharging periods are respected. 
    For the sake of contradiction assume that there is a sensor of type $\theta_i$ and some $j$ so that the sensor is assigned two targets from $\{t_j,\dots, t_{j+\tau}\}$ in $\psi'$. 
    Then, by the construction of $\psi'$ all other sensors of type $\theta_i$ are also assigned a target from $\{t_j,\dots, t_{j+\tau}\}$. It follows that sensors of $\theta_i$ are assigned at least $\ell_i+1$ many targets from $\{t_j,\dots, t_{j+\tau}\}$ in $\psi'$ and thereby also in $\psi$, which contradicts our initial assumptions on $\psi$.     
\end{proof}
Using this lemma we can easily adjust the dynamic programming formation: Instead of bookmarking the sensors $b_1,\dots,b_{\tau+1}\in S$ that have sensed the last $\tau+1$ targets, we instead bookmark the types $\theta'_1,\dots,\theta'_{\tau+1}\in \Theta$ of these sensors. Now, we can set a table entry to $\infty$ (due to violated recharging time) if $\{\theta'_1,\dots,\theta'_{\tau+1}\}$ contains more some sensor type $\theta_i\in \Theta$ more than $\ell_i$ times. The rest of the algorithm adapts in a straightforward manner with a resulting running time of $\mathcal{O}(n\cdot (k_\chi +1)^{\tau+2})$
\end{proof}

\section{Additional Material for \Cref{sec:Stackle}}\label{app:Stackle}
\stackcomp*
\begin{proof}
Let $\sigma:T\to [n]$ be some target ordering. 
We will say that two targets $t,t'\in T$ are (placed) at distance $j$ (in $\sigma$) if $|\sigma(t)-\sigma(t')|=j$. 
Further, for a target $t$ we define its $i$-surrounding to be the set of all targets whose distance from $t$ is at most $i$. 

We start the proof with two immediate claims about optimal sensing plans in response to some target ordering $\sigma$:
\begin{claim}\label{c:1}
    Let $s$ be a sensor and $t$ be a target so that $s$ is the only sensor capable of sensing $t$ and $t$ has a higher value than all the other targets $s$ can sense combined.  $s$ senses $t$ in an optimal sensing plan. 
\end{claim}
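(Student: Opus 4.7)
The claim is a clean exchange argument, so the plan is to assume toward a contradiction an optimal valid sensing plan $\psi$ in which $s$ does not sense $t$, and then construct a strictly better valid plan $\psi'$ by forcing $s$ to sense $t$.

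First I would observe that, by hypothesis, $s$ is the unique sensor capable of sensing $t$, so if $t \notin \psi(s)$ then $t$ is not sensed by any sensor in $\psi$, and hence contributes $v(t)$ to \texttt{Blue}'s utility. To construct $\psi'$, I would let $R := \{t' \in \psi(s) : |\sigma(t) - \sigma(t')| \le \tau\}$ be the subset of $s$'s currently sensed targets that conflict with the recharging constraint if $t$ were added, and define
\[
\psi'(s) := (\psi(s) \setminus R) \cup \{t\}, \qquad \psi'(s'') := \psi(s'') \text{ for } s'' \ne s.
\]
The second step is to verify validity: capability is preserved because $s$ can sense $t$ by assumption; the recharging constraint for $s$ holds because every member of $\psi(s)$ within distance $\tau$ of $t$ has been removed, and the remaining pairs in $\psi(s)\setminus R$ already respected the constraint in $\psi$; and the disjointness across sensors is preserved since no other sensor can sense $t$ (so $t \notin \psi(s'')$ for $s''\ne s$).

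The third step is to compare values. The change in the total value of sensed targets when passing from $\psi$ to $\psi'$ is
\[
v(t) - \sum_{t' \in R} v(t').
\]
Since $R \subseteq \psi(s) \setminus \{t\}$ and every element of $\psi(s)$ is a target that $s$ can sense, we have $R$ contained in ``the other targets $s$ can sense''; therefore
\[
\sum_{t' \in R} v(t') \;\le\; \sum_{\substack{t'' \ne t \\ s \text{ can sense } t''}} v(t'') \;<\; v(t),
\]
by the hypothesis of the claim. Hence $\psi'$ strictly increases the sensed value compared to $\psi$, equivalently strictly decreases $v(\psi')$ relative to $v(\psi)$, contradicting optimality. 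I do not anticipate a real obstacle here; the only thing to be careful about is to write $R$ with the correct strict vs.\ non-strict distance cutoff matching the validity definition $|\sigma(t)-\sigma(t')|>\tau$ (so $R$ is precisely the set where the inequality fails, i.e.\ $|\sigma(t)-\sigma(t')|\le \tau$), and to note that $t\notin \psi(s)$ ensures $R$ does not inadvertently include $t$ itself.
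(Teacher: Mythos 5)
Your proposal is correct and follows essentially the same exchange argument as the paper, which simply replaces $\psi(s)$ by $\{t\}$ and invokes the same value-dominance hypothesis. Your variant that retains the non-conflicting targets $\psi(s)\setminus R$ is a mild refinement of the same idea and all the verification steps (validity, disjointness, strict value improvement) go through as you describe.
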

\begin{proof}
Assume that $s$ did not sense $t$, we could just arrive at a better valid sensing plan by letting $s$ sense only $t$. 
\end{proof}

\begin{claim}\label{c:2}
    Let $s$ be a sensor with recharging time $\tau$ and $t$ be some target $s$ is capable of sensing. If there is no other target $s$ capable of sensing in the $\tau$-surrounding of $t$, then $t$ will be sensed by some sensor in the optimal sensing plan.
\end{claim}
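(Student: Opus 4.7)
The plan is to prove Claim \ref{c:2} by a short contradiction argument: assume $t$ is unsensed in some optimal sensing plan $\psi$, then construct a valid plan $\psi'$ in which $s$ additionally senses $t$, which will strictly improve \texttt{Red}'s objective and contradict optimality.

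The key observation I will use is the following immediate consequence of the hypothesis: since no other target in the $\tau$-surrounding of $t$ can be sensed by $s$, every target $t' \in \psi(s)$ must satisfy $|\sigma(t)-\sigma(t')| > \tau$. Hence $s$ is ``free'' in the $\tau$-window around $t$, so appending $t$ to $\psi(s)$ cannot violate $s$'s recharging constraint.

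Formally, I will define $\psi'$ by $\psi'(s) := \psi(s) \cup \{t\}$ and $\psi'(s') := \psi(s')$ for every other sensor $s'$, and then verify the three validity conditions: (i) $s$ is capable of sensing $t$ by assumption, while other sensors are unchanged from the valid plan $\psi$; (ii) the disjointness condition $\psi'(s) \cap \psi'(s'') = \emptyset$ for $s \neq s''$ holds because $t$ was not assigned to any sensor in $\psi$; and (iii) the recharging constraint for $s$ holds by the key observation, while for all other sensors it is inherited from $\psi$. I then conclude $v(\psi') = v(\psi) - v_t < v(\psi)$ because $v_t \in \mathbb{R}^+$ is strictly positive, contradicting the optimality of $\psi$.

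There is no real obstacle here; the argument is essentially a one-step local modification. The only subtle point is checking the recharging constraint, and that is handed to us directly by the assumption that no other target in the $\tau$-surrounding of $t$ can be sensed by $s$. I expect the proof to fit in a few lines.
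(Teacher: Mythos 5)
Your proposal is correct and is essentially the paper's own argument: the paper likewise assumes $t$ is unsensed and notes that letting $s$ sense $t$ yields a strictly better valid plan because $s$ senses no other target in the $\tau$-surrounding of $t$. Your write-up merely makes explicit the validity checks that the paper leaves implicit.
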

\begin{proof}
    Assume that $t$ is not sensed by any sensor. Then we can just arrive at a better valid sensing plan by letting $s$ sense $t$: This will not violate the recharging constraint, as $s$ does not sense any other targets in the $\tau$-surrounding of $t$. 
\end{proof}

We reduce from \textsc{Independent Set} on $3$-regular triangle-free graphs \cite[Theorem 3]{DBLP:conf/wg/GuruswamiRCCW98}.

\paragraph{Construction}
Let $G=(V,E)$ be a $3$-regular triangle-free graph.
For each vertex $v\in V$, we introduce a \emph{vertex target} $\alpha_v$ of value $1$ and for each $i\in [3]$ a \emph{blocker vertex target} $\beta^i_v$ of value $3$ and a \emph{vertex sensor} $a^i_v$ which is capable of sensing $\alpha_v$ and $\beta^i_v$. 
Moreover, for each edge $e=\{u,v\}\in E$, we introduce an \emph{edge target} $\gamma_e$ of value $3$ and an \emph{edge sensor}  $g_e$ that can sense $\gamma_e$, $\alpha_u$, and $\alpha_v$. 
Additionally, we add a \emph{constraining edge target} $\hat{\gamma}_e$ of value $4$ and a \emph{constraining edge sensor} $\hat{g}_e$  that is capable of sensing targets $\gamma_e$ and $\hat{\gamma}_e$.  
The recharging time is $\tau=3$. 
We claim that there is an $\ell$-sized independent set  $X$ in $G$ if and only if the Stackelberg equilibrium in the constructed ESG has value at least $\ell$. 

\paragraph{Forward Direction}
Assume that $X\subseteq V$ is an independent set of size $\ell$. 
To construct the target ordering we go through the vertices from the independent set one by one:  
For $v\in X$, let $e_1,e_2,e_3$ be the three edges incident to $v$. 
Then, we add to the ordering of targets the targets $\hat{\gamma}_{e_1}\beta^1_v,\beta^2_v,\gamma_{e_1},\alpha_v,\gamma_{e_2},\gamma_{e_3},\beta^3_v,\hat{\gamma}_{e_2},\hat{\gamma}_{e_3}$ in this order. 
After we have processed all vertices from the independent set like this, we append all other targets in a random ordering. 
First observe that because of  \Cref{c:1}, the constraining edge sensors will always sense the constraining edge targets. 
As a result of the structure of our target ordering, the constraint edge sensors for edges incident to $v\in X$ cannot sense the corresponding edge targets. 
From this, following a reasoning analogous to \Cref{c:1}, we get that for each edge $e$ incident to a vertex from $X$, the edge sensor $g_e$ senses the edge target $\gamma_e$.
Moreover, we also get from \Cref{c:1} that all three vertex sensors corresponding to a vertex $v\in X$ sense their respective blocker vertex target. 
All in all, it follows that for each vertex $v\in X$ all three corresponding vertex sensors and all edge sensors corresponding to incident edges sense a target within distance $3$ of $\alpha_v$. Thus, none of the sensors that are capable of sensing $v$ can do so without violating their recharging constraint.
Thus, all $\ell$ vertex targets corresponding to vertices from $X$ make it unsensed through the channel.

\paragraph{Backward Direction}
Assume that we have a target ordering $\sigma$ so that targets of summed value at least $\ell$ make it unsensed through the channel under optimal play by \texttt{Red}. 
Let $\psi$ be the optimal sensing plan played in response by \texttt{Red}. 
From \Cref{c:1}, it is immediate that no blocker vertex target and no constraining edge target can make it unsensed through the channel. Moreover, it is also easy to see that no edge target can make it unsensed through the channel, as otherwise we could always improve the plan by letting the corresponding edge sensor sense the edge target (and delete all vertex targets the sensor senses instead). 
Let $X$ be the set vertices so that the corresponding vertex targets make it unsensed through the channel. 
By the above observation, we have $|X|\geq \ell$. We will now show that $X$ is an independent set. 

For this, we make a series of observations: 
First, let $\alpha_v$ be an unsensed vertex target. 
Note that $\alpha_v$ can be sensed by $6$ different sensors (three vertex sensors and three edge sensors). 
By \Cref{c:2}, we have that for all of these sensors there needs to be another target that the sensor is capable of sensing in the $3$-surrounding of $\alpha_v$. 
As there are $6$ targets in a $3$-surrounding, it follows that the $3$-surrounding of $\alpha_v$ contains only targets that can be sensed by one of these six sensors (specifically one target for each of these six sensors). 

Second, we show that for each unsensed vertex target $\alpha_v$ there is no other vertex target in its $3$-surrounding. 
Recall from the first observations that the only vertex targets that could be in the $3$-surrounding of $\alpha_v$ are vertex targets corresponding to neighbors of $v$ that are sensed by the corresponding edge sensor in $\psi$. 
For the sake of contradiction assume that there is some $u\in V$ so that $e=\{u,v\}\in E$ and $|\sigma(\alpha_v)-\sigma(\alpha_u)|\leq 3$. 
Observe that the $3$-surroundings of $\alpha_u$ and $\alpha_v$ overlap in at least $3$ targets. 
As $G$ is triangle-free and thus $u$ and $v$ do not share any common neighbors,
this implies that the $3$-surrounding of $\alpha_u$ contains at most three targets that can be sensed by one of the six sensors that are capable of sensing $\alpha_u$ (the other three spots will be filled with vertex blocker targets corresponding to $v$ and edge targets corresponding to edges incident to $v$ by the first observation and as $v$ makes it unsensed through the channel). 
Let $s$ be a sensor that is capable of sensing $\alpha_u$ for which no other target that $s$ can sense in the $3$-surrounding of $\alpha_u$. 
Next, note that by the first observation, $\alpha_u$ is sensed by $g_e$ in $\psi$. 
We alter the sensing plan $\psi$ as follows. We let $\alpha_v$ instead of $\alpha_u$ be sensed by $g_e$ and let $\alpha_u$ be sensed by $s$. 
As argued above, this does not violate the recharging time of the sensor $s$. 
Moreover, it also does not violate the recharging time of $g_e$, as there are no other targets (except $\alpha_u$) that $g_e$ is capable of sensing in the $3$-surrounding of $\alpha_v$ (by the first observation). 
In the altered sensing plan the value of surviving targets is strictly smaller contradicting that $\psi$ is an optimal response by \texttt{Red}. 

Third, we claim that for each of two unsensed vertex targets $\alpha_v$ and $\alpha_{v'}$ we have that they are placed at a distance at least $7$ from each other, i.e., their $3$-surroundings do not overlap. 
In order to show that this cannot be the case, we need to examine the constraining edge sensors. 
Let $e$ be an edge incident to $v$.
Our first two observations told us that the edge target $\gamma_{e}$ needs to be in the $3$-surrounding of $\alpha_v$. 
However, this is not sufficient:
We claim that the constraining edge target $\hat{\gamma}_{e}$ needs to be in the $3$-surroundings of $\gamma_{e}$. 
Assume that this was not to hold. Then we could alter our sensing plan $\psi$ by making $\hat{g}_e$ sense $\gamma_e$ (it can do so without violating its recharging time because $\hat{\gamma}_{e}$ is not in the $3$-surrounding of $\gamma_{e})$ and make $g_e$ sense $\alpha_v$, which leads to a strictly better sensing plan. The claim follows.
To prove the observation, for the sake of contradiction assume that the $3$-surroundings of $\alpha_v$ and $\alpha_{v'}$ did overlap.
From the previous two observations, it follows that the only possibility for their $3$-surrounding to overlap is that there is an edge $e=\{v,v'\}\in E$ and that $\gamma_e$ constitutes the overlap of their $3$-surrounding. 
However, in this case $\hat{\gamma}_{e}$ cannot be placed in the $3$-surrounding of $\gamma_e$, as it can neither belong to the $3$-surrounding of $\alpha_v$ nor $\alpha_{v'}$ (by the first observation). This leads to a contradiction to our above claim. 

Now, combining these observations it follows that for each $v\in X$, the $3$-surrounding of $\alpha_v$ contains edge targets $\gamma_{e_1},\gamma_{e_2},\gamma_{e_3}$ with $e_1$, $e_2$, and $e_3$ being the edges incident to $v$. 
Moreover, we have shown that the $3$-surrounding of each target $\alpha_v$ with $v\in X$ is disjoint. 
This implies that no two vertices $v,u\in X$ can be incident to the same edge $e$, as otherwise, $\gamma_e$ would be in the $3$-surrounding of $\alpha_u$ and $\alpha_v$, which leads to a contradiction as they are disjoint. 
It follows that $X$ is an independent set of size at least $\ell$.

\end{proof}

\begin{algorithm}[h]
\caption{\textsc{Greedy\_Sensing}}
\label{alg:Greedy_sensing}
\textbf{Input}: Blue team's ordering $\sigma$ of targets \\
\textbf{Output}: Red team's sensing plan $\psi$ 
\begin{algorithmic}[1] 
\STATE Sort \texttt{Blue}'s target according to their value \\
\STATE Label \texttt{Red}'s sensors with $\{1, \cdots, k\}$
\FOR{each target in the order of decreasing values}
    \STATE Find all sensors that are available to sense the target with index $i$ in the input ordering
    \IF{Only one sensor $s \in [k]$ can sense}
        \STATE Set $\psi_{i,s} =1$
    \ELSIF{Multiple sensors $S \subseteq [k]$ can sense}
        \STATE Choose one $s \in S$ randomly and set $\psi_{i,s} =1$
    \ENDIF
\ENDFOR
\STATE \textbf{Return} $\psi$
\end{algorithmic}
\end{algorithm}

\bilevel*
\begin{proof}
We start by giving the formulation, where the inner-level program is similar to the ILP presented in \Cref{pr:ILP-best}. Note that the outer-level maximizes the value of non-sensed targets, while the inner-level minimizes this value:
\begin{align}
    \max_{z, t} &\sum_{i\in [n]} v_i \cdot x_{i,k+1} \label{e1} \\
    \text{s.t. } &z_i \ne z_{i'},  \forall i\ne i' \in [n] \label{e2} \\ 
    &\tau + 1 - n\cdot(1-l_{i,j})\le z_j - z_i,  \forall i,j\in[n] \label{e4} \\
    &z_j - z_i \le n\cdot l_{i,j} + \tau,  \forall i,j\in[n] \label{e3} \\
    &\mathbf{x} = \text{argmin}_{\hat{\mathbf{x}}} \sum_{i\in [n]} v_i \cdot \hat{x}_{i,k+1} \label{e4} \\
    \text{s.t. } &\sum_{j\in [k+1]} \hat{x}_{i,j}=1, \forall i \in [n]\label{e5} \\
    &\hat{x}_{i,j} \le D_{i,j}, \forall i \in [n], j \in [k] \label{e6} \\
    &\hat{x}_{i,j}+\hat{x}_{i',j}\leq 1+l_{i,i'}+l_{i',i}, \forall i\ne i' \in [n], j \in [k] \label{e7}
\end{align}
The outer-level program controls an integer variable $z_i$ for each target $i\in [n]$ that encodes the position in which the target appears in the final ordering. 
Moreover, for each pair of targets $i,j\in [n]$, we have a binary variable $l_{i,j}$ capturing whether $t_i$ appears at least $\tau+1$ positions before $t_j$ in the ordering induced by the $z$ variables. 
\Cref{e2} ensures that each target is assigned a unique position\footnote{For each $i\neq i'\in [n]$, to convert \Cref{e2} into a linear constraint, we have to introduce a new binary variable $\delta_{i,i'}$. We then add two linear constraint $z_i \le z_j - 1 + n\cdot \delta_{i,j}$ and $z_i \ge z_j + 1 + n\cdot (1-\delta_{i,j})$ that are satisfied if and onlf if $z_i\neq z_{i'}$. \label{foot}} and \Cref{e3,e4} ensures that $l_{i,j}$ is set to one if and only if  $z_j\geq z_i+\tau +1$ (in \Cref{e3,e4}  we have $\tau + 1 \le z_j - z_i$ if $l_{i,j} = 1$; and $z_j - z_i \le \tau$ if $l_{i,j}=0$).

The inner-level program controls binary variables $x_{i,j}$ for each $i\in [n]$ and $j\in [k+1]$, which encode the sensing plan as in \Cref{pr:ILP-best}, i.e, $x_{i,j}=1$ for $i\in [n]$ and $j\in [k]$ implies that $t_i$ is sensed by $s_j$. 
The value of the encoded plan is again $\sum_{i\in [n]} v_i\cdot x_{i,k+1}$, which \texttt{Blue} wants to maximize (\Cref{e1}) and \texttt{Red} wants to minimize (\Cref{e4}). 
The validity of the sensing plan is secured in \Cref{e5,e6,e7}, where \Cref{e7} imposes that a sensor $s_j$ can only sense two targets $t_i$ and $t_{i'}$ if either $t_i$ is at least $\tau+1$ positions before $t_{i'}$ in the ordering encoded in the $z$ variables (i.e., $l_{i,i'}=1$) or the other way around (i.e., $l_{i',i}=1$).
\end{proof}

We give our greedy approximation algorithm for \textsc{Best Red Response} in \Cref{alg:Greedy_sensing}. 

\section{Additional Material for \Cref{sec:NonCoordAA}}

\subsection{Proof of \Cref{th:bestblue}}

\subsubsection{Connection \textsc{Best Blue Response} and \textsc{Minimum Maximal Matching}} \label{sec:connection}
On an intuitive level, solving \textsc{Best Blue Response} with infinite recharging time $\tau=\infty$ and uniform target values has some similarities to solving the classic NP-hard \textsc{Minimum Maximal Matching} problem Yannakakis and Gavril \cite{yannakakis1980edge}: 
\decprob{Minimum Maximal Matching}{A bipartite graph $G=(U\cup V, E)$ and an integer $\ell$.}{Is there a maximal matching in $G$ containing at most $\ell$ edges?}

We now discuss the intuitive connection as well as reasons why immediate reductions between the two problems are prohibited. 
Assume that we have a solution to our \textsc{Best Blue Response} instance where targets are ordered as $(t_{1}, \dots, t_{n})$. 
From this let us construct a bipartite graph $G$ with vertices $T$ on the left side and vertices $S$ on the right side. For the edge set $E$, we add an edge between a target $t$ and a sensor $s$ if $s$ is capable of sensing $t$
Let now $F\subseteq E$ be the set of sensor-target pairs with $\{t,s\}\in F$ if $s$ senses $t$ when targets are send according to $(t_{1}, \dots, t_{n})$. 
It needs to hold that $F$ is a maximal matching in $G$: 
Otherwise, there is some $\{t,s\}\in E\setminus F$. 
The existence of this edge implies that target $t$ made it through the channel and sensor $s$ did not sense any target, which leads to a contradiction. 
Moreover, the size of this matching, i.e., $|F|$, corresponds to the number of lost targets.
Thus, \texttt{Blue} wants to find a maximal matching of minimum size. 
This discussion suggests a close connection between \textsc{Simple Sequential Covering} and \textsc{Minimum Maximal Matching}. 

However, there are some crucial differences between the two problems which prohibit immediate reductions from one problem to the other: 
Most crucially, assume we were to model a bipartite graph as an instance of \textsc{Best Blue Response} by letting $U$ be the targets and $V$ the sensors (in some ordering $(v_1,\dots, v_n)$) and  let $v$ be capable of sensing $u$ if $\{u,v\}\in E$. 
The problem with this construction is that we cannot model arbitrary matchings $E'\subseteq E$ as solutions to the constructed \textsc{Best Blue Response} instance: 
Assume that $E'$ contains some edge $\{u,v_i\}$ and there is some $v_j$ with $j<i$, $\{u,v_j\}\in E$, and $v_j$ is not incident to any edges from $E'$. 
In this case, it is not possible to send the targets through the channel such that $E'$ will be the sensed sensor-target pairs because $u$ will always be sensed by $v_j$ before it can be sensed by $v_i$ (which in turn implies that $v_i$ is still ready to sense other targets). 
As a consequence, intuitively speaking, in instances of \textsc{Best Blue Response} we are only interested in maximal matchings where each matched vertex from the left is matched to its ``first'' otherwise unmatched neighbor from the right side. 

Because of this, we need to turn to a slightly more involved reduction that draws inspiration from the NP-hardness proof of \textsc{Minimum Maximal Matching} by Yannakakis and Gavril \cite{yannakakis1980edge}, yet requires some reworking of the construction and a different more involved proof. 
We reduce from the following \textsc{SAT} variant.

\subsubsection{Proof of Correctness}
\bestblue*
\begin{proof}
In this proof, for a target $t$, we let $D(t)$ be the set of sensors that are capable of sensing $t$. 

We reduce from the following problem, which is  NP-hard as proven by Yannakakis \cite{DBLP:journals/siamcomp/Yannakakis81}. 
\decprob{Restricted 3-Sat}{A propositional formula $(X,C)$ where each clause contains three literals and each variable appears in exactly two clauses positively and in exactly one clause negatively.}{Is there an assignment to variables in $X$ such that each clause from $C$ is satisfied?}

Let $(X,C)$ be a given \textsc{Restricted 3-Sat} instance. 

\paragraph{Construction.}
Each target has value one.
For each clause $c\in C$, we add a \emph{clause} target $t_c$ and a \emph{clause} sensor $s_c$ with $D({t_c})=\{s_c\}$.

For each variable $x\in X$, we add a variable gadget.
That is, we add \emph{variable} targets $t_{x,1}$, $t_{x,2}$ and $t_{\bar{x}}$ together with \emph{dummy} targets $t^{\mathrm{du}}_{x,1}$, $t^{\mathrm{du}}_{x,2}$, $t^{\mathrm{du}}_{x,3}$, and $t^{\mathrm{du}}_{x,4}$.
Next, we add \emph{catch} sensors $s^{\mathrm{ca}}_{x,1}$, $s^{\mathrm{ca}}_{x,2}$, and $s^{\mathrm{ca}}_{\bar{x}}$, which ensure that none of the variable targets can make it through the channel. 
Moreover, we add \emph{variable} sensors $s_{x,1}$, $s_{x,2}$ and $s_{\bar{x}}$ together with \emph{dummy} sensors $s^{\mathrm{du}}_{x,1}$ and $s^{\mathrm{du}}_{x,2}$.

Let $c_i,c_j,c_k\in C$ be three clauses such that $x$ appears positive in clauses $c_i$ and $c_j$ and negative in clause $c_k$. 
The sensing matrix is defined through:
\begin{align*}
   & D({t_{x,1}})=\{s_{c_i},s_{x,1},s^{\mathrm{ca}}_{x,1}\}   & D({t_{x,1}^{\mathrm{du}}})=\{s_{x,1},s^{\mathrm{du}}_{x,1}\} \\
   & D({t_{x,2}})=\{s_{c_j},s_{x,2},s^{\mathrm{ca}}_{x,2}\} & D({t_{x,2}^{\mathrm{du}}})=\{s_{\bar x},s^{\mathrm{du}}_{x,1}\}  \\
   & D({t_{\bar x}})=\{s_{c_k},s_{\bar x},s^{\mathrm{ca}}_{\bar x}\} & D({t_{x,3}^{\mathrm{du}}})=\{s_{\bar x},s^{\mathrm{du}}_{x,2}\} \\
   & &   D({t_{x,4}^{\mathrm{du}}})=\{s_{x,2},s^{\mathrm{du}}_{x,2}\}\\
\end{align*}

The recharging time is $\infty$, i.e., each sensor can sense at most one target.
The ordering of the sensors is as follows. 
First, come the clause sensors (in some arbitrary ordering), then the dummy sensors (in some arbitrary ordering), then the variable sensors (in some arbitrary ordering) and last the catch sensors (in some arbitrary ordering).  
We ask whether there is an ordering of targets so that at least $\ell:=|C|+2|X|$ targets are not sensed.
It is easy to see that the construction satisfies the restrictions from the theorem statement.

\paragraph{Proof of Correctness: Forward Direction}
Assume we are given an assignment of variables $X$ that fulfills $C$. 
Let $X^*\subseteq X$ be the set of variables set to true in this assignment. 
From this, we construct a partition of the targets into four groups that determine the ordering in which the targets move through the channel; the first group comes first and so on;  the ordering of targets within one group is arbitrary. 
The first group consists of  targets $t_{x,1}$ and $t_{x,2}$ for each $x\in X^*$ and 
target $t_{\bar x}$ for each $x\notin X^*$. 
The second group consists of targets $t^{\mathrm{du}}_{x,1}$ and $t^{\mathrm{du}}_{x,4}$ for each $x\in X^*$ and $t^{\mathrm{du}}_{x,2}$ and $t^{\mathrm{du}}_{x,3}$  for each $x\notin X^*$. 
The third group contains all remaining variable targets. 
And, finally, the fourth group contains all $|C|$ clause targets and the remaining $2|X|$ dummy targets. 
It is sufficient to prove that all $\ell$ targets from the fourth group make it through the channel. 
We prove this via a series of three claims. 

\begin{claim}
\begin{enumerate}
    \item Each clause sensor senses a target from the first group. 
    \item Each dummy sensor senses a target from the second group. 
    \item For each $x\in X^*$, $s_{\bar x}$ senses a target from one of the first three groups, and for each $x\notin X^*$, $s_{x,1}$ and  $s_{x,2}$  sense a target from one of the first three groups.
\end{enumerate}
\end{claim}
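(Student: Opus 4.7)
My plan is to trace the greedy sensing process group by group, exploiting the fixed sensor ordering (clause sensors, then dummy sensors, then variable sensors, then catch sensors) together with the fact that targets in each group can be sensed only by a restricted subset of sensors. Throughout, I will use that $s_c$ can sense only the literal targets of $c$ plus $t_c$ (which lies in the fourth group), that $s^{\mathrm{du}}_{x,\cdot}$ can sense only dummy targets from the gadget for $x$, and that $s_{x,i}, s_{\bar x}$ can sense only variable targets and some dummy targets from their own gadget. These three facts, combined with the partition into four groups, drive all three parts of the claim.

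For part (1), I will fix a clause $c$ and pick a literal $\ell \in c$ made true by the assignment (which exists by satisfiability of $(X,C)$); the corresponding target, one of $t_{x,1}, t_{x,2}, t_{\bar x}$ depending on which occurrence of which variable $\ell$ names, sits in the first group and is sensable by $s_c$. Since the only other target $s_c$ could ever sense is $t_c$ in the fourth group, $s_c$ must commit to some first-group target by the time the first group finishes passing, which is exactly what the claim requires (it does not matter which true-literal target it picks). For part (2), I will first note that no dummy sensor can sense any first-group target, so all dummy sensors remain idle at the start of the second group. The second group is constructed so that for each $x$ exactly one of its targets is sensable by $s^{\mathrm{du}}_{x,1}$ (namely $t^{\mathrm{du}}_{x,1}$ if $x \in X^*$, else $t^{\mathrm{du}}_{x,2}$) and exactly one by $s^{\mathrm{du}}_{x,2}$ (namely $t^{\mathrm{du}}_{x,4}$ or $t^{\mathrm{du}}_{x,3}$ respectively); since clause sensors cannot intervene on dummies, both dummy sensors activate on these targets during the second group.

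For part (3), I will split on $x \in X^*$ versus $x \notin X^*$. If $x \in X^*$, then $t_{\bar x}$ is a third-group target; by (1) the clause sensor $s_{c_k}$ is already used, no dummy sensor is capable of sensing $t_{\bar x}$, and $s_{\bar x}$ is the first remaining candidate. It is still idle at this moment because its other sensable targets $t^{\mathrm{du}}_{x,2}$ and $t^{\mathrm{du}}_{x,3}$ appear only in the fourth group, so greediness forces it to sense $t_{\bar x}$. The case $x \notin X^*$ is symmetric: the third-group targets $t_{x,1}$ and $t_{x,2}$ are untouched by the (used) clause sensors $s_{c_i}, s_{c_j}$ and by the (incapable) dummy sensors, while $s_{x,1}$ and $s_{x,2}$ are idle because their only other sensable targets $t^{\mathrm{du}}_{x,1}$ and $t^{\mathrm{du}}_{x,4}$ sit in the fourth group. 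The subtlest point is in part (1): when several literals of a clause are simultaneously true, which first-group target $s_c$ actually sensed depends on the arbitrary within-group ordering of targets; I will therefore be careful to assert only that \emph{some} first-group target is sensed, which is what the subsequent arguments about the fourth-group clause and dummy targets will need.
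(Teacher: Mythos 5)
Your proof is correct and follows essentially the same route as the paper's: part (1) from satisfiability plus the clause sensors coming first in the sensor ordering, part (2) from the composition of the second group plus the dummy sensors preceding the variable sensors, and part (3) from part (1) together with the observation that the relevant variable sensor is still idle because its remaining sensable targets lie in the fourth group (a point you actually make more explicit than the paper does). The only slip is the sentence claiming that the only other target $s_c$ could ever sense is $t_c$ --- it can also sense the false-literal targets of $c$, which sit in the third group --- but this does not affect the argument, since those targets have not yet passed by the end of the first group and your closing caveat already handles the case of several true literals.
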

\begin{proof}
\textbf{Proof of 1.} This follows directly from the fact that $X^*$ is a satisfying assignment and that the clause sensors come first in the ordering of sensors. 

\textbf{Proof of 2.} This follows directly from the fact that the dummy sensors come after the clause sensors in the sensor ordering and as the second group contains for each $x\in X$ $t^{\mathrm{du}}_{x,1}$ or $t^{\mathrm{du}}_{x,2}$ (making $s^{\mathrm{du}}_{x,1}$ sense a target) and  $t^{\mathrm{du}}_{x,3}$ or $t^{\mathrm{du}}_{x,4}$ (making $s^{\mathrm{du}}_{x,2}$ sense a target).

\textbf{Proof of 3.} Let us focus on one $x\in X$, and let $c_i,c_j,c_k\in C$ be three clauses such that $x$ appears positive in clauses $c_i$ and $c_j$ and negative in clause $c_k$. 
If $x\in X^*$, then $t_{\bar x}$ is part of the third group. However, from Statement 1 it follows that $s_{c_k}$ already sensed a previous target. 
As the variable sensors are before the catch sensors in the sensor order, it follows that $s_{\bar x}$ senses $t_{\bar x}$. 
Similarly, if $x\notin X^*$, $t_{x,1}$ and $t_{x,2}$ are part of the third group. Both $s_{c_i}$ and $s_{c_j}$ have already sensed previous targets because of Statement 1. From this Statement 3 follows. 
\end{proof}
The claim implies that all sensors that can sense a target from the fourth group have already sensed another target before it is the fourth group's turn, implying that all $\ell$ targets from the fourth group will make it through the channel. 

\paragraph{Proof of Correctness: Backward Direction}
Assume that there is an ordering of the targets such that at least $\ell$ targets move unsensed through the channel, and let $P^*\subseteq P$ be the set of these targets.

We prove the backward direction in a series of claims: 
\begin{claim}\label{claim:back}
    \begin{enumerate}
        \item No variable target is part of $P^*$. 
        \item For each $x\in X$, either $t^{\mathrm{du}}_{x,1}$ or $t^{\mathrm{du}}_{x,2}$ and either $t^{\mathrm{du}}_{x,3}$ or $t^{\mathrm{du}}_{x,4}$ is part of $P^*$. All clause targets are part of $P^*$.
        \item For each $x\in X$, if $t_{\bar x}$ is sensed by a clause target, then neither $t_{x,1}$ nor $t_{x,2}$ are sensed by a clause target. 
        \end{enumerate}
\end{claim}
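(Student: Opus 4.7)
The plan is to establish Parts~1 and~2 via a tight counting identity backed by a structural ``grabbing'' observation, and then to deduce Part~3 by a short case analysis. Part~1 is immediate: for each variable target, the corresponding catch sensor sits last in the sensor order, is initially free, and can only sense this one target; hence if the variable target were not sensed by an earlier sensor, it would be grabbed by its catch sensor. So no variable target survives.

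The key to Part~2 is the following grabbing observation applied twice per variable. For each $x$, the dummy sensor $s^{\mathrm{du}}_{x,1}$ can only sense the two targets $t^{\mathrm{du}}_{x,1},t^{\mathrm{du}}_{x,2}$, and since it sits after the clause sensors (which cannot sense either dummy) but before the variable and catch sensors, $s^{\mathrm{du}}_{x,1}$ is still free when whichever of $\{t^{\mathrm{du}}_{x,1},t^{\mathrm{du}}_{x,2}\}$ arrives first reaches it, so it grabs that target. Thus at least one target of each dummy pair is sensed, and writing $d_x$ for the number of unsensed dummies of $x$ yields $d_x\le 2$, hence $\sum_x d_x\le 2|X|$. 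Analogously each clause sensor $s_c$ is always used (it grabs the first of its four compatible targets, which include $t_c$), so with $c_x$ denoting the number of variable targets of $x$ sensed by clause sensors, we have $\sum_x c_x\le |C|$ and the number of sensed clause targets equals $|C|-\sum_x c_x$. Combining with Part~1,
\[
|P^*|=\sum_x c_x+\sum_x d_x\le |C|+2|X|,
\]
and the hypothesis $|P^*|\ge |C|+2|X|$ forces equality throughout. Hence $\sum_x c_x=|C|$ (every clause sensor senses a variable target, so no clause target is sensed) and $d_x=2$ for every $x$ (exactly one dummy of each pair survives), establishing Part~2.

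For Part~3, assume $t_{\bar x}$ is sensed by $s_{c_k}$, so $s_{\bar x}$ did not sense $t_{\bar x}$. Suppose for contradiction that the pair-1 unsensed dummy were $t^{\mathrm{du}}_{x,2}$. Then $s_{\bar x}$ must be busy when $t^{\mathrm{du}}_{x,2}$ arrives; its only remaining compatible target is $t^{\mathrm{du}}_{x,3}$, so $s_{\bar x}$ would have sensed $t^{\mathrm{du}}_{x,3}$. But $s^{\mathrm{du}}_{x,2}$ likewise grabs the first pair-2 dummy to reach it, so for $s_{\bar x}$ to later sense $t^{\mathrm{du}}_{x,3}$, $s^{\mathrm{du}}_{x,2}$ must already have sensed $t^{\mathrm{du}}_{x,4}$, meaning both pair-2 dummies are sensed, contradicting $d_x=2$. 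Hence the pair-1 unsensed dummy is $t^{\mathrm{du}}_{x,1}$, which forces $s_{x,1}$ to be busy and therefore to have sensed its only other compatible target $t_{x,1}$; so $t_{x,1}$ is not sensed by a clause sensor. A symmetric argument on pair~2 yields $t_{x,2}$ sensed by $s_{x,2}$, not by a clause sensor. The main obstacle is precisely this last step: the counting alone does not pin down which dummy of each pair survives, so one must carefully propagate busy-states through the fixed sensor order clause~$<$~dummy~$<$~variable~$<$~catch, using $\tau=\infty$ at each step to rule out any remaining compatible target a sensor could have consumed.
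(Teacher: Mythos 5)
Your proof is correct and follows essentially the same route as the paper's: Part~1 via the catch sensors, Part~2 via the tight count (at most one survivor per dummy pair, plus the $|C|$ clause targets, exactly exhausts the budget $\ell=|C|+2|X|$), and Part~3 by propagating busy-states through the gadget using the fixed sensor order and $\tau=\infty$. The only organizational difference is in Part~3, where the paper argues by contradiction that if both $t_{x,1}$ and $t_{\bar x}$ were clause-sensed then at most one dummy of the gadget could survive (violating Part~2), whereas you run the propagation forward to pin down that the surviving dummies must be $t^{\mathrm{du}}_{x,1}$ and $t^{\mathrm{du}}_{x,4}$, forcing $s_{x,1}$ and $s_{x,2}$ to consume $t_{x,1}$ and $t_{x,2}$ — both rest on the same structural facts.
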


\begin{proof}
\textbf{Proof of 1.} This follows immediately from the existence of a designated catch sensor for each variable target which can only sense this target. As a consequence, no variable target can ever make it unsensed through the channel. 

\textbf{Proof of 2.} Note that because of the sensor $s_{x,1}^{\mathrm{du}}$ it is never possible that both $t^{\mathrm{du}}_{x,1}$ and $t^{\mathrm{du}}_{x,2}$ make it unsensed through the targets. Similarly, because of $s_{x,2}^{\mathrm{du}}$ it is never possible that both $t^{\mathrm{du}}_{x,3}$ and $t^{\mathrm{du}}_{x,4}$ make it unsensed  through the channel. 
Together with Statement 1, this implies that from each variable gadget at most $2$ targets can be part of $P^*$. 
By recalling that $\ell=|C|+2\cdot |X|$ and that there are only $|C|$ clause targets outside of variable gadgets, the statement follows. 

\textbf{Proof of 3.} Let us focus on $t_{x,1}$ (the proof for $t_{x,2}$ is analogous). 
For the sake of contradiction assume that $t_{x,1}$ and $t_{\bar x}$ are both sensed by clause sensors, then both $s_{x,1}$ and $s_{\bar x}$ do not sense a variable target, respectively. 
Accordingly, at most one target out of $\{t^{\mathrm{du}}_{x,1},t^{\mathrm{du}}_{x,2},t^{\mathrm{du}}_{x,3},t^{\mathrm{du}}_{x,4}\}$ can make it unsensed through the channel, contradicting Statement 2.
\end{proof}
Let $\alpha$ be a truth assignment that sets $x\in X$ to false if $t_{\bar x}$ is sensed by a clause sensor and $x\in X$ to true if $t_{x,1}$ or $t_{x,2}$ is sensed by a clause sensor.
If neither of the two conditions hold, then we set $x$ to true.
Note that the well-definedness of $\alpha$ follows immediately from Statement 3 of \Cref{claim:back}.
Assume that $c\in C$ is not satisfied by $\alpha$. However, this implies that $s_{c}$ does not sense a variable target corresponding to a literal appearing in $c$ (by the definition of $\alpha$).
This implies that $s_c$ will sense $t_c$, a contradiction to Statement 2 of \Cref{claim:back}. 
\end{proof}

\subsection{Proof of \Cref{bbr:DP}}
\DBGreedy*
\begin{proof}
Recall that we assume that sensors act greedily and that the sensor ordering is fixed and known.
We iteratively construct the target ordering always appending an additional target at the end of the ordering, while storing the types of already sent targets as well as the sensors that sensed the last $\tau+1$ targets.
For our dynamic program, we create a table $J[i_1,\dots,i_{n_\chi},b_1,\dots,b_{\tau+1}]\in \mathbb{N}$ with $i_j\in [\ell_j]$ for each $j\in [n_\chi]$ and $b_1,\dots,b_{\tau+1}\in S\cup \{\emptyset\}$. 
For a table cell, let $i:= \sum_{j\in [n_\chi]} i_j$, i.e., $i$ is the total number of targets that have been sent. 
An entry of the table stores the maximum value of targets that can survive if \texttt{Blue} sends $i_j$ targets of type $\gamma_j$ (for each $j\in [n_\chi]$) through the channel in a way that the $t$th last target sent for $t\in [\max(1,(\sum_{j\in [n_\chi]} i_j)-\tau),\sum_{j\in [n_\chi]} i_j]$ is sensed by sensor 
if $b_j\neq \emptyset$ and by no sensor if $b_j=\emptyset$ (the intuition is that $b_1$ is the sensor that sensed the most recent passing target (if existent), $b_2$ sensed the second most recent one, and so on. 
If the second constraint is not realizable, we set the table entry to $-\infty$. 
The answer to our problem is $\min_{b_1,\dots,b_{\tau+1}\in S\cup \{\emptyset\}}J[\ell_1,\dots,\ell_{n_\chi},b_1,\dots,b_{\tau+1}]$. 

For the initialization, we set an entry $J[i_1,\dots,i_{n_\chi},b_1,\dots,b_{\tau+1}]$  to $0$ if $\sum_{j\in [n_\chi]} i_j=0$ and to $-\infty$ otherwise. 
Now, we update the table for increasing $\sum_{j\in [n_\chi]} i_j=0,1,\dots,n$ by filling $J[i_1,\dots,i_{n_\chi},b_1,\dots,b_{\tau+1}]$ as follows.
We start by assuming that $b_1\neq \emptyset$ implying that the next target to be sent needs to be sensed by $b_1$.
If $b_1$ appears among $b_2,\dots, b_{\tau+1}$, we set the entry to $-\infty$. Otherwise, let $\gamma_{j_1}, \dots, \gamma_{j_z}$ be the target types so that $b_1$ is capable of sensing targets of this type and only sensors from $\{b_2,\dots,b_{\tau+1}\}$ appear before $b_1$ in the sensor ordering and are capable of sensing targets of this type. Less formally speaking, $\gamma_{j_1}, \dots, \gamma_{j_z}$  are all the target types so that if a target of this type is sent next over the channel $b_1$ would be the sensor sensing this target (as all other sensors placed before $b_1$ that can sense targets of this type are still recharging, i.e., they are part of $\{b_2,\dots,b_{\tau+1}\}$). 
 If no such target type exists, we set  $J[i_1,\dots,i_{n_\chi},b_1,\dots,b_{\tau+1}]$ to $-\infty$.
 Otherwise, for each $t\in [z]$, we check $\max_{s\in S\cup \{\emptyset\}} J[i_1,\dots,i_{\gamma_{j_t}}-1,\dots, i_{n_\chi},b_2,\dots,b_{\tau+1},s]$ and let the entry be the maximum of these values. 

Analogously, if $b_1= \emptyset$, we let 
$\gamma_{j_1}, \dots, \gamma_{j_z}$ be the target types where $\{b_2,\dots,b_{\tau+1}\}$ contains all the sensors that are capable of sensing targets of this type. 
If no such target type exists, we set the entry to be $-\infty$. 
Otherwise,  for each $t\in [z]$, we compute $\max_{s\in S\cup \{\emptyset\}} J[i_1,\dots,i_{\gamma_{j_t}}-1,\dots, i_{n_\chi},b_2,\dots,b_{\tau+1},s]$ plus the utility of targets of type $\gamma_{j_t}$ and let the entry be the maximum of these values. 

The correctness follows from the fact that the initially stated invariant is preserved throughout the algorithm. 
Observing that computing each table entry takes $\mathcal{O}(n_{\chi}\cdot (k+1))$ time, the claimed running time of $\mathcal{O}\left(n_\chi\cdot\left(\prod_{i=1}^{n_\chi} (\ell_i+1)\right)\cdot (k+1)^{\tau+2}\right)$ follows. 

\end{proof}

\subsection{Proof of \Cref{pr:bestblueILP}}
\bestblueILP*
\begin{proof}
We model an instance $\mathcal{I}$ of \textsc{Best Blue Response} as follows.
For each target $i\in [n]$, we add an integer variable $z_i$ that encodes the position in which the target appears in the final ordering. 
We add linear constraints so that $z_i\in [1,n]$ and $z_i\neq z_{i'}$ for all $i\neq i'\in [1,n]$ (see \Cref{foot}).

Next, similar as in \Cref{pr:ILP-best}, for each $i\in [n]$ and $j\in [k+1]$, we add a binary variable $x_{i,j}$. Setting $x_{i,j}$ to one means that $t_i$ is detected by sensor $s_j$ or in case that $j=k+1$ that the target makes it unsensed through the channel. 
Accordingly, the objective becomes: 
$$\max \sum_{i\in [n]} v_i \cdot x_{i,k+1}.$$
For each target $i\in [n]$, we impose that: 
$$\sum_{j\in [k+1]} x_{i,j}=1.$$
Moreover, we impose $x_{i,j}\leq D_{i,j}$ for each $i\in [n]$ and $j\in [k]$, enforcing the sensor capabilities. 

To ensure that the recharging times of sensors are respected we add the following set of constraints. 
For each, $j\in [k]$ and $i,i'\in [n]$, we add 
$$|z_{i'}-z_{i}| \geq -n(2-x_{i,j}-x_{i',j}) + \tau. $$
This ensures that if $x_{i,j}=1$ and $x_{i',j}=1$, then $i$ and $i'$ are placed far enough away from each other, while otherwise the condition is vacant. 
To realize the absolute value from the above equation, we have to introduce another set of binary variables $o_{i,i'}$ for $i,i'\in [n]$ and add the constraints: 
$z_{i'}-z_{i}+n\cdot o_{i,i'}  \geq -n(2-x_{i,j}-x_{i',j}) + \tau $ and $z_{i}-z_{i'}+n\cdot (1-o_{i,i})  \geq -n(2-x_{i,j}-x_{i',j}) + \tau $. 

If a target $i$ is sensed by sensor $j$ then due to the recharging time sensor $j$ will not be able to sense other sensors, thus $i$ ``protects'' some targets from being sensed by sensor $j$. 
To capture this information, for each $i,i'\in [n]$ and $j\in [k]$, we add a binary variable $y_{i,i',j}$ that is equal to one target $i$ is sensed by sensor $j$ and because of this $j$ cannot sense $i'$. 
To ensure this, first, for each $i\in [n]$ and $j\in [k]$, we add: 
$$\sum_{i'\in [n]} y_{i,i',j} \leq n\cdot x_{i,j}$$
(a target $i$ can only protect other targets if the target is sensed by the corresponding sensor).
Moreover, for each $i,i'\in [n]$ and $j\in [k]$, we need to make sure that if $y_{i,i',j}=1$, then $0\leq z_{i'}-z_i \leq \tau$ (to exploit of the recharging constraint). For this, we add constraints:\begin{equation}\label{eq:cover_constr} -n(1-y_{i,i',j}) \le z_{i'}-z_{i} \le n(1-y_{i,i',j}) + \tau. \end{equation} 

Lastly, we need to make sure that a target will survive until step $j$ if $x_{i,j}=1$ i.e., the target needs to be covered by other targets for all sensors that are capable of sensing it placed before $j$. Note that this together with the first constraint ($\sum_{j\in [k+1]} x_{i,j}=1$) in particular implies that each target is sensed by the first sensor it passes which is not recharging, thereby successfully encoding the greedy behavior of the sensors. 
Specifically, we add the following set of constraints for each $i\in [n]$ and $j\in [k+1]$: 
\begin{align} \label{eq:prob}
    \sum_{t\in [j-1]: D_{i,t}=0} 1+ &\sum_{t\in [j-1]:D_{i,t}=1} \sum_{i'\in [n]} y_{i',i,t} -(j-1)\\ \nonumber &  \geq -n(1-\sum_{t=j}^{k+1} x_{i,t}).
\end{align}
\end{proof}

\section{Additional Experimental Results}\label{appendix_sec:exp}

To begin, we generate a Figure \ref{fig:prob_d} illustrating the average utility for \texttt{Blue} across varying probabilities of $D_{i,j} = 1$. This visualization aims to demonstrate the impact of the probability of $D_{i,j} = 1$ on \texttt{Blue}'s utility under the {\bfseries\scshape Default}  game settings.

\begin{figure}[tbh]
    \centering
    \includegraphics[width=0.4\textwidth]{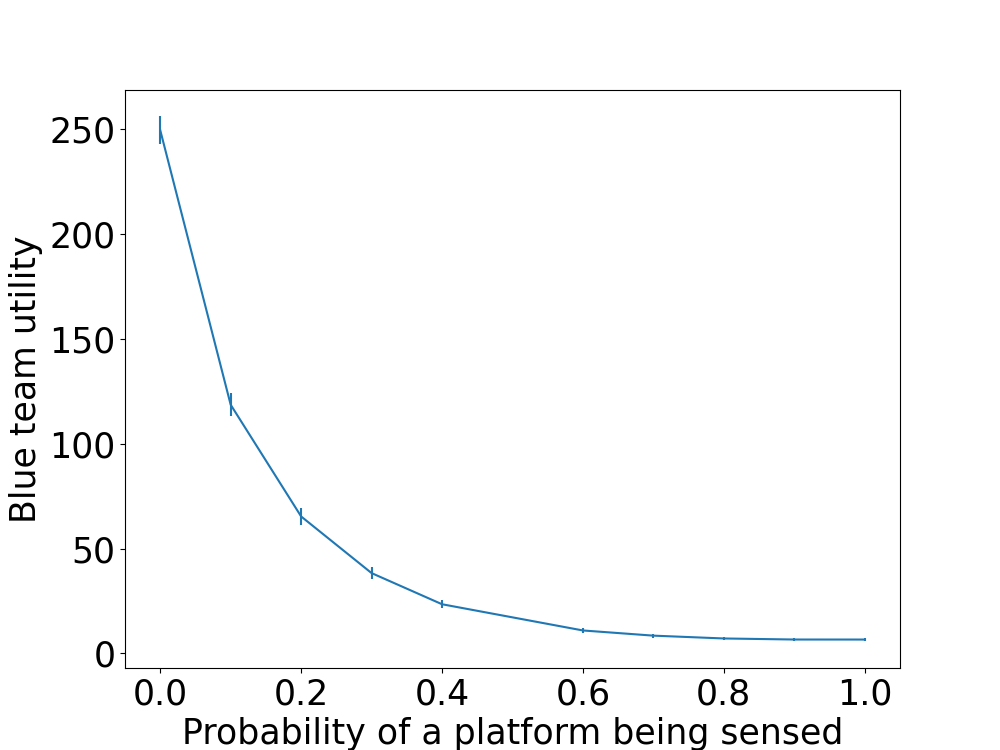}
    \caption{In this experiment, we focus on the {\bfseries\scshape Default}  setting, but changes the probability of each element $D_{i,j} = 1$. We vary the probability from $0$ to $1$ and observe \texttt{Blue}'s utility. We generate 50 random instances with $500$ targets, $10$ sensors, and $\tau=10$ under each setting. Each target's value is drawn within $[0,1]$ uniformly.}
    \label{fig:prob_d}
\end{figure}

In Table \ref{tab:runtime_red_team_ILP}, we show the effectiveness of our ILP solver. Notably, it demonstrates the capability to solve large instances very fast, completing the task within a second.
This proficiency has been valuable in the development of heuristic algorithms for bilevel optimization in the search for identifying the Stackelberg equilibrium. We also show that our ILP solver can solve extensive instances involving hundreds of targets within a single hour in Table \ref{tab:scability_red_team_ILP}. 

\begin{small}
\begin{table}[tbh]
    \centering
    \begin{tabular}{|c|c|c|c|}
    \hline
         \makecell{Utility, \\ \textit{Time (s)}} &  2 & 5 &10\\
         \hline
        5 & \makecell{1.76$\pm$ 0.75, \\ \textit{0.001}}&  \makecell{ 1 $\pm$ 0.63, \\ \textit{0.002 $\pm$ 0.02}} & \makecell{ 0.38 $\pm$ 0.43, \\ \textit{0.003 $\pm$ 0.002}}\\
        \hline
        25  & \makecell{8.73 $\pm$ 1.27, \\ \textit{0.004 $\pm$ 0.002}}& \makecell{ 4.55 $\pm$ 1.3, \\ \textit{0.008 $\pm$ 0.002}} &  \makecell{ 1.39 $\pm$ 0.84, \\ \textit{0.01 $\pm$ 0.003}}\\
        \hline
        75  & \makecell{26$\pm$ 2.8, \\ \textit{0.01 $\pm$ 0.007}}& \makecell{ 13.9$\pm$ 2.33, \\ \textit{0.02 $\pm$ 0.007}}&  \makecell{ 4.41 $\pm$ 1.57, \\ \textit{0.04 $\pm$ 0.007}}\\
        \hline
    \end{tabular}
    \caption{\textsc{Best Red Response} ILP running time: Each row represents the number of targets, and each column represents the number of sensors. $\tau=2$ for every setting. Each element represents the average \texttt{Blue}'s utility (first row) and the average solving time (\textit{italic second row}) of 50 randomly generated instances under {\bfseries\scshape Default} game setting.}
    \label{tab:runtime_red_team_ILP}
\end{table}
\end{small}

\begin{small}
\begin{table}[tbh]
    \centering
    \begin{tabular}{|c|c|c|c|}
    \hline
         \makecell{Utility, \\ \textit{Time (s)}} &  5 & 10 & 20\\
         \hline
        600 & \makecell{175$\pm$ 6, \\ \textit{0.73 $\pm$ 0.01}}&  \makecell{ 77 $\pm$ 5, \\ \textit{1.55 $\pm$ 0.22}}& \makecell{ 4.01 $\pm$ 1.53, \\ \textit{2.79 $\pm$ 0.04}}\\
        \hline
        800  & \makecell{234 $\pm$ 6.5, \\ \textit{1.02 $\pm$ 0.01}}&  \makecell{ 103 $\pm$ 6.1, \\ \textit{2.09 $\pm$ 0.27}}& \makecell{ 5.28 $\pm$ 2.06, \\ \textit{3.82 $\pm$ 0.04}}\\
        \hline
        1000  & \makecell{292$\pm$ 7.2, \\ \textit{1.23 $\pm$ 0.02}}&  \makecell{ 131 $\pm$ 6.7, \\ \textit{2.62 $\pm$ 0.35}}& \makecell{ 6.4$\pm$ 2, \\ \textit{5 $\pm$ 0.06}}\\
        \hline
        5000  & \makecell{1478 $\pm$ 20, \\ \textit{6.5 $\pm$ 0.05}}&  \makecell{ 662 $\pm$ 13.1, \\ \textit{13.66 $\pm$ 1.48}}& \makecell{ 33.1 $\pm$ 3.98, \\ \textit{24.76 $\pm$ 0.051}}\\
        \hline
        10,000  & \makecell{2959$\pm$ 24, \\ \textit{12.9 $\pm$ 0.18}}&  \makecell{ 1321 $\pm$ 20, \\ \textit{29.8 $\pm$ 2.6}}& \makecell{ 66.3$\pm$ 5.57, \\ \textit{62.59 $\pm$ 0.83}}\\
        \hline
    \end{tabular}
    \caption{Each row represents the number of targets, and each column represents the number of sensors. $\tau=10$ for every setting. Each element represents the average \texttt{Blue}'s utility (first row) and the average solving time (\textit{italic second row}) of the ILP for 50 randomly generated instances under {\bfseries\scshape Default} game setting.}
    \label{tab:scability_red_team_ILP}
\end{table}
\end{small}

In the remaining parts of this section, we explore a new game setting ({\bfseries\scshape Append}) for generating ESGs. The new method is similar to {\bfseries\scshape Default}, with the distinction that each element $D_{i,j}= 1$ with a 0.5 probability. In essence, this configuration increases the likelihood of each target being sensed compared to the {\bfseries\scshape Default} setting, thereby resulting in a \textit{stronger} \texttt{Red} sensing model.

\subsection{Computing the Follower Strategy}\label{sub:follower_strat}

Similar to the {\bfseries\scshape Default} game setting, we show the scalability results  of {\bfseries\scshape Append} in table \ref{tab:scability_red_bestreponse_time}.

\begin{small}
\begin{table}[H]
    \centering
    \begin{tabular}{|c|c|c|c|}
    \hline
         \makecell{Utility, \\ \textit{Time (s)}} &  5 & 10 & 20\\
         \hline
        600 & \makecell{67.8 $\pm$ 4.3, \\ \textit{0.74 $\pm$ 0.06}}&  \makecell{ 13.82 $\pm$ 3.96, \\ \textit{4.68 $\pm$ 8.46}}& \makecell{ 0.2 $\pm$ 1.38, \\ \textit{2.85 $\pm$ 0.04}}\\
        \hline
        800  & \makecell{91.1 $\pm$ 5.8, \\ \textit{1.02 $\pm$ 0.1}}&  \makecell{ 19.6 $\pm$ 7.75, \\ \textit{6.27 $\pm$ 10.98}}& \makecell{ 0.13 $\pm$ 0.36, \\ \textit{3.66 $\pm$ 0.04}}\\
        \hline
        1000  & \makecell{114 $\pm$ 7.9, \\ \textit{1.29 $\pm$ 0.14}}&  \makecell{ 24.5 $\pm$ 5.9, \\ \textit{6.06 $\pm$ 4.7}}& \makecell{ 0.25$\pm$ 1.45, \\ \textit{4.77 $\pm$ 0.06}}\\
        \hline
    \end{tabular}
    \caption{Each row represents the number of targets, and each column represents the number of sensors. $\tau=10$ for every setting. Each element represents the average \texttt{Blue}'s utility (first row) and the average solving time (\textit{italic second row}) of the ILP for 50 randomly generated instances under {\bfseries\scshape Append} game setting.}
    \label{tab:scability_red_bestreponse_time}
\end{table}
\end{small}

In this new setting, an intriguing observation emerges as the number of sensors increases significantly: the runtime of our ILP decreases, given that the abundant sensors can effectively sense all targets (e.g., when the number of sensors increased from 10 to 20.).

\subsection{Additional Results from Computing the Stackelberg Equilbrium} 

In this subsection, we begin by showing Figure \ref{fig:mu_hyper_tuning}, illustrating the impact of the choice of ration ($\mu$) on the SA algorithm discussed in Section \ref{sec:Stackle}. 
Specifically, we present and test three quadratic-time greedy heuristics to build the sensing plan $\psi$ iteratively by trying to sense the most valuable targets first. 
We consider the targets in decreasing order of their value. Let $T'$ be the already processed targets and $t_\ell$ the target to consider. Moreover, let  $S'\subseteq S$ the set of sensors $s$ so that $\psi$ remains a valid sensing plan after adding $t_\ell$ to $\psi(s)$, i.e., the sensors that are currently free to sense $t_\ell$. 
If $S'$ is empty, then we do not assign $t_\ell$ to any sensor, implying that it will be won by \texttt{Blue}. 
Otherwise, we apply three different methods to decide which sensor from $S'$ to pick: 
\begin{description}
    \item[random] Randomly select a sensor from $S'$.
    \item[remaining\_value] Pick the sensor $s$ from $S'$ that has the lowest summed value of remaining targets that $s$ is capable of sensing, i.e., $\argmin_{s_j\in S'} \sum_{t_i\in T \setminus T': D_{i,j}=1} v_i$.
    \item[harm] Pick the sensor $s$ from $S'$ that where assigning $t_\ell$ does the least harm: The harm that $t_\ell$ does to $s$ in $\psi$ summed value of remaining targets that $s$ is capable of sensing that it can no longer sense when $t_\ell$ is assigned to $s$, i.e., $\argmin_{s_j\in S'} \sum_{t_i\in T \setminus T': D_{i,j}=1 \text{ and } |i-\ell|\leq \tau } v_i$.
\end{description}

\begin{figure}[tbh]
    \centering
    \includegraphics[width=0.45\textwidth]{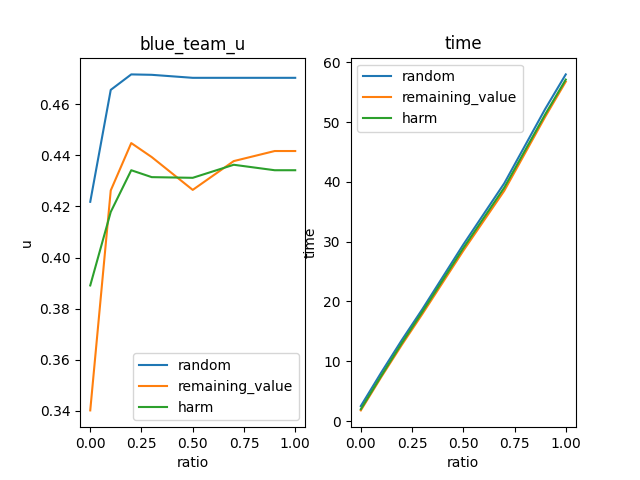}
    \caption{In this experiment, we generate 50 random instances with $10$ targets, $5$ sensors, and $\tau=2$ under {\bfseries\scshape Append} game setting. Each target's value is drawn within $[0,1]$ uniformly.} 
    \label{fig:mu_hyper_tuning}
\end{figure}

We also present the bilevel ILP's scalability results under the new {\bfseries\scshape Append} game setting in Table \ref{tab:append_scability_bilevel_time}.

\begin{small}
\begin{table}[H]
    \centering
    \begin{tabular}{|c|c|c|c|}
    \hline
         \makecell{Utility, \\ \textit{Time (s)}} &  2 & 3 & 5\\
         \hline
        5 & \makecell{0.72 $\pm$ 0.55, \\ \textit{0.73 $\pm$ 0.17}}&  \makecell{ 0.41 $\pm$ 0.49, \\ \textit{0.85 $\pm$ 0.12}}& \makecell{ 0.14$\pm$ 0.33, \\ \textit{1.02 $\pm$ 0.04}}\\
        \hline
        7 & \makecell{0.92 $\pm$ 0.43, \\ \textit{121 $\pm$ 13}}&  \makecell{ 0.57 $\pm$ 0.53, \\ \textit{134 $\pm$ 17}}& \makecell{ 0.18$\pm$ 0.33, \\ \textit{155 $\pm$ 16}}\\
        \hline 
        8 & \makecell{0.86 $\pm$ 0.51, \\ \textit{1841 $\pm$ 356}}&  \makecell{ 0.41 $\pm$ 0.42, \\ \textit{2151 $\pm$ 253}}& \makecell{ 0.13$\pm$ 0.27, \\ \textit{2832 $\pm$ 549}}\\
        \hline 
        9  & \makecell{ n/a, \\ \textit{31358}}&  \makecell{ n/a, \\ \textit{36181}}& \makecell{ n/a, \\ \textit{41999}}\\
        \hline
    \end{tabular}
    \caption{Each row represents the number of targets, and each column represents the number of sensors. $\tau=2$ for every setting. Each element (when $n=5, 7, 8$) represents the average \texttt{Blue}'s utility (first row) and the solving time \textit{(italic second row)} of the bilevel ILP for 50 randomly generated instances under the new {\bfseries\scshape Append} game setting. At $n=9$, the computational time is prohibitively high. Therefore, we conduct a single run on a random instance and record the solving time. Since the utility of this individual instance is not comparable to the average utility derived from 50 random instances, we have omitted it from the table.}
    \label{tab:append_scability_bilevel_time}
\end{table}
\end{small}

We also provide a comparison of heuristic algorithms under the new {\bfseries\scshape Append} game setting in Table \ref{tab:approx_greedy}.

\begin{table}[tbh]
    \centering
    \begin{tabular}{|c|c|c|}
         \hline
         \makecell{Utility, \\ \textit{Time (s)}}  & (7,3,2) & (75, 10, 5) \\
         \hline 
         OPT & \makecell{0.58 $\pm$ 0.53, \\\textit{ 134 $\pm$ 17}} & n/a \\
         \hline
         SA &\makecell{0.58 $\pm$ 0.53, \\ \textit{6.13 $\pm$ 0.87}}  & \makecell{2.26 $\pm$ 1.13, \\\textit{27335 $\pm$ 347}}  \\
         \hline 
         SA\_Relax & \makecell{0.58 $\pm$ 0.53, \\ \textit{0.043 $\pm$ 0.01}} & \makecell{0.56 $\pm$ 0.48, \\ \textit{46.87} $\pm$ 0.92}\\
         \hline
         Random & \makecell{0.50 $\pm$ 0.54, \\ \textit{0.001}} & \makecell{0.15 $\pm$ 0.27, \\ \textit{0.001}} \\
         \hline
    \end{tabular}
    \caption{Compare the approximability of different greedy algorithms in terms of \texttt{Blue}'s utility and solving time. We generate 50 random instances and report the averaged value plus the standard deviation under {\bfseries\scshape Append} game setting.}
    \label{tab:approx_greedy}
\end{table}
\vspace{-5mm}

\subsection{Additional Results from Non-Coordinated Sensing}\label{appendix_sec:nonCoordinated}
In Table \ref{tab:scability_tau3}, we present the scalability results of the ILP that solves for optimal \texttt{Blue} responses in the non-coordinated \texttt{Red} sensing setting under {\bfseries\scshape Append} game setting. 

\begin{small}
\begin{table}[H]
    \centering
    \begin{tabular}{|c|c|c|c|}
    \hline
         \makecell{Utility, \\ \textit{Time (s)}} &  2 & 3 & 5\\
         \hline
        5 & \makecell{1.26 $\pm$ 0.46, \\ \textit{0.01 $\pm$ 0.04}}&  \makecell{ 0.96 $\pm$ 0.47, \\ \textit{0.02 $\pm$ 0.04}}& \makecell{ 0.52$\pm$ 0.42, \\ \textit{0.02 $\pm$ 0.04}}\\
        \hline
        10 & \makecell{2.34 $\pm$ 0.7, \\ \textit{4.07 $\pm$ 16.8}}&  \makecell{1.85 $\pm$ 0.89, \\ \textit{90.2 $\pm$ 294.6}}& \makecell{ 1$\pm$ 0.73, \\ \textit{293 $\pm$ 565}}\\
        \hline 
        15  & \makecell{ 3.79 $\pm$ 0.87, \\ \textit{482 $\pm$ 2551}}&  \makecell{ 3.01 $\pm$ 1.16, \\ \textit{50.4 $\pm$ 18.9}}& \makecell{ n/a, \\ \textit{40296} $\pm$ n/a}\\
        \hline
        20  & \makecell{ 4.61 $\pm$ 1, \\ \textit{889 $\pm$ 1530}}&  \makecell{ n/a, \\ \textit{9380 $\pm$ n/a}}& \makecell{ n/a, \\ \textit{n/a}}\\
        \hline
    \end{tabular}
    \caption{Each row represents the number of targets, and each column represents the number of sensors. $\tau=2$ for every setting. Each element represents the average \texttt{Blue}'s utility (first row) and the solving time \textit{(italic second row)} of \texttt{Blue}'s best response ILP for 50 randomly generated instances under {\bfseries\scshape Append} game setting. For settings where the computational time is prohibitively high, we conduct a single run on a random instance and record the single solving time. Thus, their standard deviation value is recorded as ``n/a''.  Additionally, the utility of this single instance is not comparable to the average utility derived from 50 random instances, we have omitted it from the table.}
    \label{tab:scability_tau3}
\end{table}
\end{small}

For the non-coordinated sensing setting, we also provide a comparison of heuristic algorithms under the new scenario in Table \ref{tab:approx_greedy2}.

\begin{table}[tbh]
    \centering
    \begin{tabular}{|c|c|c|}
         \hline
         \makecell{Utility, \\ \textit{Time (s)}}  & (10,5,2) & (75, 10, 5) \\
         \hline 
         OPT &  \makecell{ \textbf{1}$\pm$ 0.73, \\ \textit{293 $\pm$ 565}} & n/a \\
         \hline
         SA &\makecell{0.83 $\pm$ 0.67, \\ \textit{0.76 $\pm$ 0.02}}  & \makecell{4.57 $\pm$ 1.46, \\ \textit{523 $\pm$ 6.28}}   \\
         \hline 
         SA\_Relax & \makecell{0.94 $\pm$ 0.71, \\ \textit{0.02 $\pm$ 0.003}} & \makecell{3.47 $\pm$ 1.02, \\ \textit{3.13 $\pm$ 0.12}}\\
         \hline
         Random & \makecell{0.38 $\pm$ 0.5, \\ \textit{0.001}} & \makecell{1.28 $\pm$ 1.05, \\ \textit{0.001}} \\
         \hline
    \end{tabular}
    \caption{Compare the approximability of different greedy algorithms in terms of \texttt{Blue}'s utility and solving time. We generate 50 random instances and report the averaged value plus the standard deviation under {\bfseries\scshape Append} game setting.}
    \label{tab:approx_greedy2}
\end{table}

\subsubsection{Power of Coordination}
Finally, in Table \ref{tab:power_coor_appendix0}, we present the power of coordination results under {\bfseries\scshape Append} game setting with \texttt{Red} that has stronger sensing capabilities.  
\begin{table}[tbh]
    \centering
    \begin{tabular}{|c|c|c|}
    \hline
         &  Greedy  & Coordination\\
         \hline
        (5, 2, 2) & 1.26 $\pm$ 0.46&  0.72 $\pm$ 0.55  \\
        \hline
        (5, 3, 2) &  0.96 $\pm$ 0.47 &   0.41 $\pm$ 0.49 \\
        \hline 
        (5, 5, 2)  & 0.52 $\pm$ 0.42& 0.14 $\pm$ 0.33 \\
        \hline
    \end{tabular}
    \caption{{\bfseries\scshape Append} setting: Each element represents the average \texttt{Blue}'s utility for 50 randomly generated instances. The game is a constant-sum game. Therefore, the decrease in \texttt{Blue}'s utility corresponds to an increase in \texttt{Red}'s utility, showing the power of coordination.}
    \label{tab:power_coor_appendix0}
\end{table}

Moreover, in scenarios with large instance sizes, such as $n=75, k=10, \tau=5$, where the optimal (bilevel) ILP is unsolvable, we can compare \texttt{Blue}'s approximately optimal utility under the best heuristic algorithms. Specifically, as shown in Table \ref{tab:approx_greedy}, the average \texttt{Blue}'s utility under the SA algorithm is $2.26 \pm 1.13$ for 9 instances. Given the same 9 instances, the average \texttt{Blue}'s utility when escaping from \textit{non-coordinated sensing} is $4.43 \pm 1.41$, which is approximately \textit{twice} the value observed in the \textit{coordinated sensing setting}. Once again, due to the constant-sum nature of this game, the utility loss for \texttt{Blue} in transitioning from non-coordinated sensing to coordinated sensing essentially represents the utility gain for \texttt{Red}, highlighting the power of coordination.

\end{document}